\newtheorem{proposition}{Proposition}
\newtheorem{theorem}{Theorem}
\newtheorem{lemma}{Lemma}
\theoremstyle{definition}
\definecolor{darkgreen}{rgb}{0,0.6,0.2}
\newcommand{\Rb}{\mathbb R} 
\newcommand{\Cb}{\mathbb C} 
\newcommand{\Zb}{\mathbb Z} 
\newcommand{\abs}[1]{\left| #1 \right|} 
\newcommand{\hh}{\mathcal{H}} 
\newcommand{\lh}{\mathcal{L(H)}} 
\newcommand{\ip}[2]{\left\langle\,#1\,|\,#2\,\right\rangle} 
\newcommand{\kb}[2]{|#1\rangle\langle#2|} 
\newcommand{\no}[1]{\left\|#1\right\|} 
\newcommand{\tr}[1]{\textrm{tr}\left[#1\right]} 
\newcommand{\ran}{\textrm{Im}\,} 
\newcommand{\rank}{{\rm rank}\,} 
\newcommand{\id}{\mathbbm{1}} 
\newcommand{\spanno}[1]{{\rm span}\,\left\{ #1 \right\}}
\newcommand{\va}{\mathbf{a}} 
\newcommand{\vb}{\mathbf{b}} 
\newcommand{\vsigma}{\boldsymbol{\sigma}} 
\newcommand{\vnull}{\mathbf{0}}
\newcommand{\Ao}{\mathsf{A}}
\newcommand{\Bo}{\mathsf{B}}
\newcommand{\Co}{\mathsf{C}}
\newcommand{\Uo}{\mathsf{U}}
\newcommand{\ca}[1]{{\mathcal{#1}}}
\newcommand{\Lam}{\Lambda}
\newcommand{\lam}{\lambda}
\newcommand{\phii}{\varphi}
\newcommand{\bgam}{{\boldsymbol\gamma}}
\newcommand{\rme}{{\rm e}}
\newcommand{\rmi}{{\rm i}}
\newcommand{\ext}[1]{{\rm ext}\,#1}
\newcommand{\OO}[1]{\ca{O}(#1)}
\newcommand{\cOO}{\ca{CO}(\Omega)}
\newcommand{\pair}[2]{\left\langle\,#1\,,\,#2\,\right\rangle} 
\newcommand{\aligno}[1]{\begin{align*} #1 \end{align*}}
\newcommand{\equano}[1]{\begin{equation*} #1 \end{equation*}}
\newcommand{\equasi}[2]{\begin{equation} \label{#1} #2 \end{equation}}
\begin{document}

\title[Extremal compatible observables by means of two MUB]{Constructing extremal compatible quantum observables by means of two mutually unbiased bases}

\dedicatory{{Dedicated to the memory of Paul Busch}}

\author[]{Claudio Carmeli}
\address{\textbf{Claudio Carmeli}; DIME, Universit\`a di Genova, Via Magliotto 2, I-17100 Savona, Italy}
 \email{carmeli@dime.unige.it}

\author[]{Gianni Cassinelli}
\address{\textbf{Gianni Cassinelli}; Dipartimento di Fisica, Universit\`a di Genova, Via Dodecaneso 33, I-16146 Genova, Italy}
\email{cassinelli@ge.infn.it}
          
\author[]{Alessandro Toigo}
\address{\textbf{Alessandro Toigo}; Dipartimento di Matematica, Politecnico di Milano, Piazza Leonardo da Vinci 32, I-20133 Milano, Italy, and I.N.F.N., Sezione di Milano, Via Celoria 16, I-20133 Milano, Italy}
\email{alessandro.toigo@polimi.it}

\maketitle

\begin{abstract}
We describe a particular class of pairs of quantum observables which are extremal in the convex set of all pairs of compatible quantum observables. The pairs in this class are constructed as uniformly noisy versions of two mutually unbiased bases (MUB) with possibly different noise intensities affecting each basis. We show that not all pairs of MUB can be used in this construction, and we provide a criterion for determiniing those MUB that actually do yield extremal compatible observables. We apply our criterion to all pairs of Fourier conjugate MUB, and we prove that in this case extremality is achieved if and only if the quantum system Hilbert space is odd-dimensional. Remarkably, this fact is no longer true for general non-Fourier conjugate MUB, as we show in an example. Therefore, the presence or the absence of extremality is a concrete geometric manifestation of MUB inequivalence, that already materializes by comparing sets of no more than two bases at a time.
\end{abstract}

\section{Introduction}

Quantum incompatibility is one of the most striking and still elusive features of quantum theory \cite{HeMiZi16}. Basically, it can be condensed into the statement that, within the quantum world, not all observable quantities can be measured simultaneously. In a more precise reformulation, there exist pairs of quantum observables whose outcome probability distributions cannot be both statistically postprocessed from the results of a single joint measurement \cite{Holevo85,AlCaHeTo09}.

A very important observation circumventing the no-go essence of quantum incompatibility -- and dating back to the mid 70's -- asserts that even two incompatible quantum observables can be described by some joint measurement, if one contents himself with `unsharp' or `noisy' versions of the respective outcome statistics \cite{QTOS76,QDET76,PSAQT82}. Now, restricting the discussion to observables with a finite number of outcomes, a very simple way to generate smeared versions of their outcome probability distributions is by mixing them with the uniform -- i.e., white -- noise. The minimal amount of uniform noise yielding compatibility obviously depends upon the two original observables. In particular, it can be interpreted as an index quantifying their degree of incompatibility: the higher is the minimal noise, the more incompatible are the two observables. In more recent times, this has indeed been developed into a method for comparing the incompatibility content of different pairs of observables \cite{HeScToZi14,HeKiRe15} or even different probabilistic theories alternative to quantum mechanics \cite{BuHeScSt13,BaGaGhKa13,JePl17,Gu16}.

The above described way of turning two incompatible observables into a compatible pair is reminescent of the (random) robustness of entanglement for quantum states \cite{ViTa99}: in both cases, a classical feature -- separability for states or compatibility for pairs of observables -- is achieved by mixing the original entangled state or incompatible observables with a trivial state -- e.g., the maximally mixed state -- or trivial observable -- e.g., the uniform probability distribution -- respectively. By virtue of this analogy, the amount of uniform noise triggering compatibility is sometimes called (random) `noise robustness' of a pair of observables \cite{UoBuGuPe15,CaSk16,DeSkFrBr19}. The noise robustness has a clear geometric meaning: it is the linear coordinate of the point where the line segment joining the given pair of quantum observables and the pair of classical uniform probabilities pierces the convex compact set of all compatible pairs of observables \cite{Haapasalo15}.

Evaluating the noise robustness thus gives an inspection of the relative boundary of the set of all compatible pairs of observables. Since any convex compact set is completely determined by the knowledge of its extreme points, it is relevant to know for what kind of incompatible observables the boundary points found in this way are actually extremal compatible pairs. In the present paper, we answer this question for pairs of observables constituted by two mutually unbiased bases (MUB). That is, we completely characterize the pairs of MUB whose uniformly noisy versions actually yield extreme points. We show that in dimension $d\geq 3$, the property of yielding extreme points is determined by the eigenvalues of a $d^2\times d^2$ real symmetric matrix first introduced by Haagerup in \cite{Haagerup96} and canonically associated with the MUB at hand. In this way, we remarkably find that in some dimensions there are MUB both with and without the above mentioned property; extremality is then a concrete geometric manifestation of the existence of inequivalent pairs of MUB in sufficiently high dimensions.

The problem of determining the minimal amount of uniform noise which makes two MUB compatible was solved for the first time in \cite{CaHeTo12} under the assumption that the two bases are conjugate by the Fourier transform of a cyclic group. In \cite{DeSkFrBr19,UoLuMoHe16}, these results were extended to the case of arbitrary pairs of equally noisy MUB, and finally to arbitrary uniformly noisy MUB with possibly different and even negative noise parameters in \cite{CaHeToPRL19}. On the other hand, a general characterization of the extreme points of the set of all pairs of compatible observables is provided in \cite{GuCu18}. The results presented here are then essentially derived by combining \cite{CaHeToPRL19} and \cite{GuCu18}.

We finally outline the structure of the paper. In Sec.~\ref{sec:prel}, we establish the notations, formally state our problem and recall the main related results from \cite{CaHeToPRL19,Busch86}. In Sec.~\ref{sec:warm-up}, we preliminarly solve the problem in the simple two-dimensional case. In Sec.~\ref{sec:higher-d}, we extend the approach for dimension $2$ to arbitrary dimensions $d\geq 3$, and establish our first main result about the connection between extremality and the eigenvalues of the Haagerup matrix. Sec.~\ref{sec:examples} provides some applications: in Sec.~\ref{subsec:Fourier}, we focus on two Fourier conjugate MUB and prove that their uniformly noisy versions yield extremal pairs of compatible observables if and only if the dimension $d$ is odd; in the subsequent Sec.~\ref{subsec:numeric}, we characterize extremality for some non-Fourier conjugate pairs of MUB taken from \cite{TaZy06}. It turns out that in dimensions $d\geq 3$ there exists one isolated pair of uniformly noisy MUB that does not have any analogue in dimension $d=2$; we characterize the extremality of this pair in Sec.~\ref{sec:exceptional}. Our concluding Sec.~\ref{sec:discussion} contains a final discussion.

This paper is dedicated to Paul. We have learned from his books, his papers, his lectures, now we miss him as a friend, we miss his ideas and his collaboration. We can only remember Paul and offer this little contribution, hoping that he could have liked some ideas that come out also from his understanding of Quantum Mechanics.

\section{Preliminaries and notations}\label{sec:prel}

We fix a finite-dimensional complex Hilbert space $\hh$, with $\dim\hh = d$. We denote by $\lh$ the vector space of all linear operators on $\hh$, and we write $\id$ for the identity operator. An {\em observable} with outcomes in a finite set $\Omega$ is a map $\Ao:\Omega\to\lh$ such that $\Ao(x)$ is a positive operator for all $x\in\Omega$, and $\sum_{x\in\Omega} \Ao(x) = \id$. We write $\OO{\Omega}$ for the convex compact set of all observables with outcomes in $\Omega$. The uniform observable $\Uo_\Omega\in\OO{\Omega}$ is defined as $\Uo_\Omega(x) = (1/\abs{\Omega})\,\id$ for all $x\in\Omega$, where $\abs{\Omega}$ is the cardinality of $\Omega$.

Two observables $\Ao,\Bo\in\OO{\Omega}$ are {\em compatible} if there exists a third observable $\Co\in\OO{\Omega^2}$ such that its {\em margin observables}
\equano{
\Co_{[1]}(x) = \sum_{y\in\Omega} \Co(x,y)\,, \qquad\qquad \Co_{[2]}(y) = \sum_{x\in\Omega} \Co(x,y)
}
satisfy the equalities $\Co_{[1]} = \Ao$ and $\Co_{[2]} = \Bo$. In this case, $\Co$ is a {\em joint observable} of $\Ao$ and $\Bo$. Otherwise, $\Ao$ and $\Bo$ are {\em incompatible}. We further denote by $\cOO$ the convex compact set of all pairs of compatible observables with outcomes in $\Omega$, i.e.,
\equano{
\cOO = \{(\Co_{[1]},\Co_{[2]})\mid \Co\in\OO{\Omega^2}\} \,.
}
The relative boundary and the set of all extreme points of $\cOO$ are indicated by $\partial\cOO$ and $\ext{\cOO}$, respectively \cite{CA70}.

Now, let $\Omega$ be any set with cardinality $\abs{\Omega} = d$, and suppose the vectors $\{\phii_x\mid x\in\Omega\}$ and $\{\psi_y\mid y\in\Omega\}$ are two mutually unbiased bases (MUB) of $\hh$. That is, they are orthonormal bases of $\hh$ satisfying the mutal unbiasedness condition
\equasi{eq:MUB_condition}{
\abs{\ip{\phii_x}{\psi_y}} = \frac{1}{\sqrt{d}} \qquad \forall x,y\in\Omega \,.
}
We are interested in the following two sharp observables $\Ao,\Bo\in\OO{\Omega}$
\equasi{eq:MUB}{
\Ao(x) = \kb{\phii_x}{\phii_x}\,,\qquad\qquad\Bo(y) = \kb{\psi_y}{\psi_y}
}
and in their smeared versions
\equasi{eq:noisy-MUB}{
\Ao_\lam = \lam\Ao + (1-\lam)\Uo_\Omega\,,\qquad\qquad \Bo_\mu = \mu\Bo + (1-\mu)\Uo_\Omega
}
with unsharpness parameters $\lam,\mu\in\left[1/(1-d)\,,\,1\right]$. Note that the latter interval constitues all values of $\lam$ and $\mu$ for which \eqref{eq:noisy-MUB} actually defines two observables. In particular, whenever $\lam,\mu > 0$, the observables $\Ao_\lam$, $\Bo_\mu$ can be understood as uniformly noisy versions of the sharp observables $\Ao$, $\Bo$. For $\lam,\mu$ taking the negative values $\left[1/(1-d)\,,\,0\right)$, however, this interpretation does no longer apply. Nevertheless, with some abuse of terminology we will refer to the observables in \eqref{eq:noisy-MUB} as the {\em uniformly noisy versions} of the given two MUB for all (even negative) $\lam,\mu\in\left[1/(1-d)\,,\,1\right]$.

The problem of determining all values of $\lam$ and $\mu$ such that $\Ao_\lam$ and $\Bo_\mu$ constitute a pair of compatible observables has been addressed in \cite{CaHeToPRL19,Busch86}; the resulting compatibility region
\equano{
C_d = \bigg\{(\lam,\mu)\in \bigg[\frac{1}{1-d}\,,\,1\bigg]^2 \mid (\Ao_\lam,\Bo_\mu)\in\cOO\bigg\}
}
is depicted in Fig.~\ref{fig:MUBregion} for two different values of $d$. The extreme points $\ext{C_d}$ of the convex set $C_d$ turn out to be
\begin{enumerate}[(a)]
\item when $d=2$,\label{it:a_compatibility}
\equasi{eq:circle}{
\ext{C_2} = \{(\lam,\mu)\in\Rb^2\mid\lam^2+\mu^2 = 1\}
}
(see \cite[Cor.~4.6]{Busch86});
\item when $d\geq 3$,\label{it:b_compatibility}
\equano{
\ext{C_d} = \Gamma_d \cup \{\bgam_d\} \,,
}
where $\Gamma_d$ is the part of an ellipse described by the equations
\equasi{eq:ellipse}{
\Gamma_d : \begin{cases}
(d-1)(\lam + \mu) \geq d-3 \\
d(\lam^2 + \mu^2) + 2(d-2) \lam\mu - 2(d-2) (\lam + \mu) = 4-d
\end{cases}
}
and $\bgam_d$ is the point
\equasi{eq:point}{
\bgam_d = \bigg(\frac{1}{1-d}\,,\,\frac{1}{1-d}\bigg)
}
(see \cite[Thm.~S3 of the Supplementary Material]{CaHeToPRL19}).
\end{enumerate}

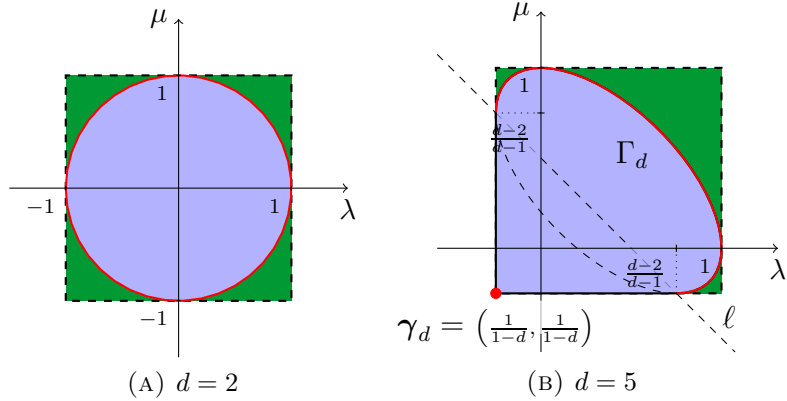
\begin{figure}[h!]\def\xy{1.5}
\centering
\subfloat[$d=2$]{\label{fig:MUBregion_a}
\begin{tikzpicture}[xscale=1.50,yscale=1.50,
declare function={
rsin(\x) = sin(deg(\x));
rcos(\x) = cos(deg(\x));
}]

\draw [thick, dashed, fill = darkgreen]
(-1,-1) -- (-1,1) -- plot (1,1) -- (1,-1) -- (-1,-1) ;

\draw [color=blue!30!white, domain=-1:1, samples=40, fill = blue!30!white]
plot ({rcos(\x*pi)}, {rsin(\x*pi)});

\draw [thick, red, domain=-1.0:1.0, samples=40]
plot ({rcos(\x*pi)}, {rsin(\x*pi)});

\draw ({rcos(0.25*pi)}, {rsin(0.25*pi)});

\draw[->] ({-\xy},0)--(\xy,0)node[anchor=north]{\small $\lam$};
\draw[->] (0,{-\xy})--(0,\xy)node[anchor=east]{\small $\mu$};

\draw (-1,0)node[anchor=north east]{\tiny $-1$};
\draw[thin,-] (-1,-0.02)--(-1,0.02);

\draw (1,0)node[anchor=north east]{\tiny $1$};
\draw[thin,-] (1,-0.02)--(1,0.02);

\draw (0,-1)node[anchor=north east]{\tiny $-1$};
\draw[thin,-] (-0.02,-1)--(0.02,-1);

\draw (0,1)node[anchor=north east]{\tiny $1$};
\draw[thin,-] (-0.02,1)--(0.02,1);

\end{tikzpicture}}
\subfloat[$d=5$]{\label{fig:MUBregion_b}
\def\d{5}
\begin{tikzpicture}[xscale=2.4,yscale=2.4,
declare function={
ratan(\x) = rad(atan(\x));
rsin(\x) = sin(deg(\x));
rcos(\x) = cos(deg(\x));
}]

\def\th0{(pi-ratan(sqrt(\d-1)))}
\def\xymen{(-(0.46/1.2)*\xy)}
\def\xypiu{((1.05/1.2)*\xy)}

\draw [thick, dashed, fill = darkgreen]
({1/(1-\d)},{1/(1-\d)}) -- ({1/(1-\d)},1) -- (1,1) -- (1,{1/(1-\d)}) -- ({1/(1-\d)},{1/(1-\d)}) ;

\draw [thick, domain=-1:1, samples=40, fill=blue!30!white]
plot({1/(1-\d)},{1/(1-\d) + 1/2 * (\x+1)})
-- plot ({(\d*rcos(\x*\th0 + \th0) + 2 - \d)/(2*(1 - \d))}, {(\d*rcos(\x*\th0 - \th0) + 2 - \d)/(2*(1 - \d))})
-- plot({1/(1-\d) + 1/2 * (-\x+1)},{1/(1-\d)}) ;

\draw [thick,red, domain=-1:1, samples=40]
plot ({(\d*rcos(\x*\th0 + \th0) + 2 - \d)/(2*(1 - \d))}, {(\d*rcos(\x*\th0 - \th0) + 2 - \d)/(2*(1 - \d))}) ;

\draw [dashed, domain=1:{(2*pi)/\th0-1}, samples=40]
plot ({(\d*rcos(\x*\th0 + \th0) + 2 - \d)/(2*(1 - \d))}, {(\d*rcos(\x*\th0 - \th0) + 2 - \d)/(2*(1 - \d))}) ;

\draw[->] ({\xymen},0)--({\xypiu},0)node[anchor=north]{\small $\lam$};
\draw (0,{\xymen})--(0,{\xymen + 0.1});
\draw [opacity=0.3] (0,{\xymen + 0.1})--(0,{1/(1-\d)-0.04});
\draw[->] (0,{1/(1-\d)-0.04})--(0,{\xypiu})node[anchor=east]{\small $\mu$};

\draw [dashed] ({\xymen},{-\xymen+(\d-3)/(\d-1)}) -- ({-1/(\d-1)},{(\d-2)/(\d-1)}) ;
\draw [dashed,opacity=0.3] ({-1/(\d-1)},{(\d-2)/(\d-1)}) -- (0,{(\d-3)/(\d-1)});
\draw [dashed] (0,{(\d-3)/(\d-1)}) -- ({(\d-3)/(\d-1)},0) ;
\draw [dashed,opacity=0.3] ({(\d-3)/(\d-1)},0) -- ({(\d-2)/(\d-1)},{-1/(\d-1)});
\draw [dashed] ({(\d-2)/(\d-1)},{-1/(\d-1)}) -- ({-\xymen+(\d-3)/(\d-1)},{\xymen})
node[pos=0.90,above=0.1cm]{$\ell$} ;


\draw ({(\d-2)/(\d-1)},0)node[anchor=north east]{\tiny $\frac{d-2}{d-1}$};
\draw[thin,-] ({(\d-2)/(\d-1)},-0.02)--({(\d-2)/(\d-1)},0.02);
\draw[dotted,-] ({(\d-2)/(\d-1)},0)--({(\d-2)/(\d-1)},{1/(1-\d)});

\draw (1,0)node[anchor=north east]{\tiny $1$};
\draw[thin,-] (1,-0.02)--(1,0.02);


\draw (0,{(\d-2)/(\d-1)})node[anchor=north east]{\tiny $\frac{d-2}{d-1}$};
\draw[thin,-] (-0.02,{(\d-2)/(\d-1)})--(0.02,{(\d-2)/(\d-1)});
\draw[dotted,-] (0,{(\d-2)/(\d-1)})--({1/(1-\d)},{(\d-2)/(\d-1)});

\draw (0,1)node[anchor=north east]{\tiny $1$};
\draw[thin,-] (-0.02,1)--(0.02,1);

\draw ({(\d*rcos(\th0) + 2 - \d)/(2*(1 - \d))}, {(\d*rcos(-\th0) + 2 - \d)/(2*(1 - \d))}) node[anchor=north east]{$\Gamma_d$};

\draw[red,fill=red] ({-1/(\d-1)},{-1/(\d-1)}) circle (0.8pt);
\draw ({-1/(\d-1)},{-1/(\d-1)})  node[anchor=north, yshift=-2pt]{$\bgam_d =$ \tiny $\left(\frac{1}{1-d} , \frac{1}{1-d}\right)$};
\end{tikzpicture}}
\caption{The set $C_d$ of all couples $(\lam,\mu)$ for which \eqref{eq:noisy-MUB} defines two observables $\Ao_\lam,\Bo_\mu$ (green square), and the one for which these observables are compatible (blue region) for different values of the dimension $d$. The extreme points $\ext{C_d}$ (red points) constitute all values of $(\lam,\mu)$ for which the pair of observables $(\Ao_\lam,\Bo_\mu)$ may be extremal in the set $\cOO$.\label{fig:MUBregion}}
\end{figure}

The values of $\lam,\mu$ for which the pairs of observables $(\Ao_\lam,\Bo_\mu)$ are extremal in the set $\cOO$ then need to be seeked among those listed in \eqref{eq:circle}, \eqref{eq:ellipse} and \eqref{eq:point}. This is indeed the task of the present paper.

\section{Extremality in two-dimensional systems}\label{sec:warm-up}

As a simple warm-up, we consider the two-dimensional case $d=2$. Then, for a suitable choice of unit vectors $\va,\vb\in\Rb^3$, the two noisy observables \eqref{eq:noisy-MUB} can be rewritten as
$$
\Ao_\lam (x) = \frac{1}{2}\left(\id + x\lam\,\va\cdot\vsigma\right)\,,\qquad\qquad\Bo_\mu (y) = \frac{1}{2}\left(\id + y\mu\,\vb\cdot\vsigma\right) \,,
$$
where $\vsigma = (\sigma_1,\sigma_2,\sigma_3)$ are the three Pauli matrices on $\hh\simeq\Cb^2$, and we used the labeling $\Omega = \{+,-\}$ for the observable outcomes. The mutual unbiasedness condition \eqref{eq:MUB_condition} is equivalent to $\va\cdot\vb = \vnull$. By \cite[Cor.~4.6]{Busch86}, it then follows that $(\Ao_\lam,\Bo_\mu)\in\cOO$ if and only if $(\lam,\mu)$ belongs to the unit disk $C_2 = \{(\lam,\mu)\in\Rb^2 \mid \lam^2+\mu^2 \leq 1\}$.

Whenever $(\Ao_\lam,\Bo_\mu)\in\cOO$, we can thus define the four new observables
$$
\Ao^\pm (x) = \frac{1}{2}\left[\id + x(\lam\va \pm \mu\vb)\cdot\vsigma\right]\,,\qquad\qquad\Bo^\pm (y) = \frac{1}{2}\left[\id + y(\mu\vb \pm \lam\va)\cdot\vsigma\right] \,.
$$
Note that $(\Ao_\lam,\Bo_\mu) = (1/2) [(\Ao^+,\Bo^+) + (\Ao^-,\Bo^-)]$, and clearly $(\Ao^+,\Bo^+) \neq (\Ao^-,\Bo^-)$ unless $\lam = \mu = 0$. Moreover, since
$$
\frac{1}{2}\no{(\lam\va \pm \mu\vb) + (\mu\vb \pm \lam\va)} + \frac{1}{2}\no{(\lam\va \pm \mu\vb) - (\mu\vb \pm \lam\va)} = \sqrt{\lam^2+\mu^2} \,,
$$
the compatibility of $(\Ao_\lam,\Bo_\mu)$ implies that also $(\Ao^\pm,\Bo^\pm)\in\cOO$ by \cite[Thm.~4.5]{Busch86}. This makes us conclude that the pair of observables $(\Ao_\lam,\Bo_\mu)$ is not an extreme point of the set $\cOO$ for any $(\lam,\mu)\in\ext{C_2}$.

Going into more detail, an alternative reason for which $(\lam,\mu)\in\ext{C_2}$ never implies $(\Ao_\lam,\Bo_\mu)\in\ext{\cOO}$ is the following. For $(\lam,\mu)\in C_2$, a joint observable $\Co$ of $\Ao_\lam$ and $\Bo_\mu$ is given by
\equasi{eq:joint_qubit}{
\Co(x,y) = \frac{1}{4}\left(\id + x\lam\,\va\cdot\vsigma + y\mu\,\vb\cdot\vsigma\right) \,.
}
Now, the essential point is that the operators $\{\Co(x,y)\mid x,y\in\Omega\}$ are not linearly independent. The non-extremality of the margin observables $(\Ao_\lam,\Bo_\mu)$ is then a consequence of the next general result.

\begin{proposition}\label{prop:ext->indep}
Suppose $\Ao,\Bo\in\OO{\Omega}$ are any two compatible observables such that $(\Ao,\Bo)$ is an extreme point of $\cOO$, and let $\Co$ be a joint observable of $\Ao$ and $\Bo$. Then, the operators $\{\Co(x,y)\mid x,y\in\Omega\}$ are linearly independent.
\end{proposition}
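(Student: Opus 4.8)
The plan is to prove the contrapositive: if the operators $\{\Co(x,y)\mid x,y\in\Omega\}$ are linearly dependent, then $(\Ao,\Bo)$ is not an extreme point of $\cOO$. Suppose then that $\sum_{x,y}c_{xy}\Co(x,y)=\nul$ with coefficients $c_{xy}$ not all zero. Since each $\Co(x,y)$ is self-adjoint, taking the adjoint of this relation and combining shows that its real and imaginary parts are themselves relations, so we may assume $c_{xy}\in\Rb$; discarding the terms with $\Co(x,y)=\nul$, we may moreover assume $c_{xy}=0$ whenever $\Co(x,y)=\nul$. (This cleanup empties the relation precisely when the \emph{nonzero} operators among the $\Co(x,y)$ are already independent; that degenerate case must be handled separately, and it does not arise for the joint observables relevant to this paper, all of whose operators are nonzero.)

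The key device is the affine one-parameter family of candidate joint observables
\[
\Co^{(t)}(x,y) := (1+t\,c_{xy})\,\Co(x,y)\,, \qquad |t| < t_0 := \Big(\max_{x,y}\abs{c_{xy}}\Big)^{-1}\,.
\]
First I would check that $\Co^{(t)}\in\OO{\Omega^2}$ for $|t|<t_0$: positivity holds because $1+t\,c_{xy}>0$, and $\sum_{x,y}\Co^{(t)}(x,y)=\id+t\sum_{x,y}c_{xy}\Co(x,y)=\id$. Hence the margins $(\Ao^{(t)},\Bo^{(t)}):=(\Co^{(t)}_{[1]},\Co^{(t)}_{[2]})$ lie in $\cOO$, depend affinely on $t$ — explicitly $\Ao^{(t)}(x)=\Ao(x)+t\sum_y c_{xy}\Co(x,y)$, and symmetrically for $\Bo^{(t)}$ — and reduce to $(\Ao,\Bo)$ at $t=0$. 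If this path were non-constant, then for a small $t>0$ the pair $(\Ao,\Bo)$ would be the midpoint of the two distinct points $(\Ao^{(t)},\Bo^{(t)})$ and $(\Ao^{(-t)},\Bo^{(-t)})$ of $\cOO$, contradicting extremality. So the path is constant, forcing
\[
\sum_{y}c_{xy}\Co(x,y)=\nul \quad\forall x\,, \qquad\qquad \sum_{x}c_{xy}\Co(x,y)=\nul \quad\forall y\,.
\]

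To conclude, I would rerun this argument with perturbations supported on a single row, and then on a single column. Fixing $x_0$ and using the identity just obtained, $\Co(x,y)+t\,\delta_{x,x_0}c_{x_0y}\Co(x_0,y)$ again belongs to $\OO{\Omega^2}$ for $|t|<t_0$, its first margin is exactly $\Ao$, and its second margin is $\Bo(y)+t\,c_{x_0y}\Co(x_0,y)$; extremality forces this to be constant, i.e.\ $c_{x_0y}\Co(x_0,y)=\nul$ for every $y$. Letting $x_0$ range over $\Omega$ yields $c_{xy}\Co(x,y)=\nul$ for all $x,y$, and together with the cleanup $c_{xy}=0$ whenever $\Co(x,y)=\nul$ this gives $c_{xy}=0$ for all $x,y$, contradicting the assumption that the $c_{xy}$ were not all zero.

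The step I expect to be the crux is the passage from the single \emph{global} relation to the family of \emph{row- and column-wise} relations: one pass of the perturbation only establishes that a margin-valued affine path is constant, and one must notice that this constancy already encodes the vanishing of every $\sum_y c_{xy}\Co(x,y)$ and every $\sum_x c_{xy}\Co(x,y)$ — exactly the identities that make the localized perturbations correctly normalized in the next step. By comparison, the positivity bookkeeping is routine once the cutoff $t_0$ is fixed, and reducing non-constancy of an affine path to a failure of extremality is the standard convexity argument.
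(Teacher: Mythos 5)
Your proof is correct, and it takes a genuinely different route from the paper's. The paper perturbs the joint observable itself, setting $\Co^{\pm}(x,y)=(1\pm\varepsilon c(x,y))\Co(x,y)$, and then invokes \cite[Cor.~5]{GuCu18} --- the nontrivial external fact that an extreme point of $\cOO$ has a unique joint observable and that this joint observable is extreme in $\OO{\Omega^2}$ --- to convert non-extremality of $\Co$ into non-extremality of $(\Ao,\Bo)$. You avoid that citation entirely: your first perturbation pass shows that extremality of the margins forces the row and column sums $\sum_y c_{xy}\Co(x,y)$ and $\sum_x c_{xy}\Co(x,y)$ to vanish, and your second pass uses exactly those identities to build a correctly normalized perturbation supported on a single row whose second margin visibly moves unless $c=0$. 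This is longer, but it is self-contained and in effect re-derives the only direction of the cited corollary that is needed here. One further remark: the degenerate case you flag (a dependence relation supported only where $\Co(x,y)=0$) is not vacuous, and there the proposition as literally stated fails --- for instance, $\Ao=\Bo$ projective with two outcomes is extreme in $\cOO$, yet its unique joint observable satisfies $\Co(x,y)=0$ for $x\neq y$, so the family $\{\Co(x,y)\}$ is linearly dependent. The paper's own proof has the same silent gap, since in that case $\Co^{+}=\Co^{-}$. Your explicit caveat that all effects of the joint observables actually used in the paper are nonzero is therefore the right repair, and none of the applications of the proposition are affected.
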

\begin{proof}
Suppose the operators $\{\Co(x,y)\mid x,y\in\Omega\}$ are not linearly independent, and let $c:\Omega^2\to\Cb$ be a non-zero function such that $\sum_{x,y} c(x,y)\Co(x,y) = 0$. By possibly replacing $c$ with either $c+\overline{c}$ or $\rmi (c-\overline{c})$, it can be assumed that $c(x,y)\in\Rb$ for all $x,y$. If $\varepsilon > 0$ is small enough, we can define two observables $\Co^+,\Co^-\in\OO{\Omega^2}$ given by $\Co^\pm(x,y) = (1\pm\varepsilon c(x,y))\Co(x,y)$ for all $x,y\in\Omega$. Clearly, $\Co = (1/2)(\Co^+ + \Co^-)$ and $\Co^+\neq\Co^-$, hence $(\Ao,\Bo)$ is not an extreme point of $\cOO$ by \cite[Cor.~5]{GuCu18}.
\end{proof}

We end this section by noticing that, for $(\lam,\mu)\in\ext{C_2}$ and either $\lam \geq 0$ or $\mu \geq 0$, the joint observable \eqref{eq:joint_qubit} is of the L\"uders form
\equasi{eq:Luders1}{
\Co(x,y) = \begin{cases}
\Ao_\lam(x)^\frac{1}{2}\Bo(y)\Ao_\lam(x)^\frac{1}{2} & \text{ if $\mu\geq 0$} \\
\Bo_\mu(y)^\frac{1}{2}\Ao(x)\Bo_\mu(y)^\frac{1}{2} & \text{ if $\lam\geq 0$}
\end{cases} \,,
}
where
$$
\Ao_\lam(x)^\frac{1}{2} = \frac{1}{2\sqrt{2}}\left\{ \left[(1+\lam)^\frac{1}{2} + (1-\lam)^\frac{1}{2}\right]\id + x \left[(1+\lam)^\frac{1}{2} - (1-\lam)^\frac{1}{2}\right]\va\cdot\vsigma \right\}
$$
is the square root of the operator $\Ao_\lam(x)$ and a similar formula holds for $\Bo_\mu(y)^\frac{1}{2}$.
Remarkably, when $d\geq 3$ and $(\lam,\mu)\in\ext{C_d}$ lies in the part of an ellipse \eqref{eq:ellipse}, the L\"uders observables \eqref{eq:Luders1} still constitute joint observables of $\Ao_\lam$ and $\Bo_\mu$, as it is explained in the next section.

\section{Extremality in higher dimensions}\label{sec:higher-d}

In dimensions $d\geq 3$, we have
$$
\Ao_\lam(x)^\frac{1}{2} = u_\lam\Ao(x) + \frac{v_\lam}{\sqrt{d}}\,\id\,,\qquad\qquad \Bo_\mu(y)^\frac{1}{2} = u_\mu\Bo(y) + \frac{v_\mu}{\sqrt{d}}\,\id 
$$
for all $\lam,\mu\in\left[1/(1-d)\,,\,1\right]$, where we set
\equasi{eq:uv}{
u_\nu = \frac{1}{\sqrt{d}}\left\{[1+(d-1)\nu]^\frac{1}{2} - (1-\nu)^\frac{1}{2} \right\} \,,\qquad\qquad v_\nu = (1-\nu)^\frac{1}{2} \,.
}
Note that the real constants $u_\nu$ and $v_\nu$ satisfy
\equasi{eq:propr_uv}{
u^2_\nu + v^2_\nu + \frac{2u_\nu v_\nu}{\sqrt{d}} = 1 \ \ \text{and} \ \ v_\nu \geq 0 \quad \text{for all $\nu$}\,,\qquad u_\nu \geq 0 \ \ \text{iff} \ \ \nu\geq 0 \,.
}
Since $\Ao(x)\Bo(y)\Ao(x) = (1/d)\Ao(x)$ and $\Bo(y)\Ao(x)\Bo(y) = (1/d)\Bo(y)$ by the mutual unbiasedness condition \eqref{eq:MUB_condition}, the L\"uders observables \eqref{eq:Luders1} become
\equasi{eq:Luders2}{
\begin{aligned}
\Ao_\lam(x)^\frac{1}{2}\Bo(y)\Ao_\lam(x)^\frac{1}{2} = \frac{1}{d}\left[u^2_\lam\Ao(x) + v_\lam^2\Bo(y)\right] + \frac{u_\lam v_\lam}{\sqrt{d}}\left[\Ao(x)\Bo(y) + \Bo(y)\Ao(x)\right] , \\
\Bo_\mu(y)^\frac{1}{2}\Ao(x)\Bo_\mu(y)^\frac{1}{2} = \frac{1}{d}\left[v_\mu^2\Ao(x) + u^2_\mu\Bo(y)\right] + \frac{u_\mu v_\mu}{\sqrt{d}}\left[\Ao(x)\Bo(y) + \Bo(y)\Ao(x)\right] .
\end{aligned}
}
Evaluating their margin observables yields the pairs $(\Ao_\lam,\Bo_{\gamma_\lam})$ and $(\Ao_{\gamma_\mu},\Bo_\mu)$, respectively, where
\begin{subequations}\label{eq:gammanu}
\begin{align}
\gamma_\nu & = v^2_\nu + \frac{2u_\nu v_\nu}{\sqrt{d}} = 1-u^2_\nu \label{eq:gammanu1} \\
& = \frac{1}{d} \left\{(d-2)(1-\nu) + 2(1-\nu)^\frac{1}{2}\left[1+(d-1)\nu\right]^\frac{1}{2}\right\} \,. \label{eq:gammanu2}
\end{align}
\end{subequations}
From the expression \eqref{eq:gammanu2}, we see that $\gamma_\nu \geq 0$ for all $\nu\in\left[1/(1-d)\,,\,1\right]$. Moreover, by direct inspection, both points $(\lam,\mu) = (\lam,\gamma_\lam)$ and $(\lam,\mu) = (\gamma_\mu,\mu)$ lie on the curve $\Gamma_d$ defined in \eqref{eq:ellipse}. Thus, the mappings $\lam\mapsto(\lam,\gamma_\lam)$ and $\mu\mapsto(\gamma_\mu,\mu)$ are two different parametrizations of $\Gamma_d$: they parametrize the two parts of $\Gamma_d$ with either $\mu\geq 0$ or $\lam\geq 0$, respectively. These parametrizations overlap when both $\lam\geq 0$ and $\mu\geq 0$; in this case, since $u_\lam\geq 0$ and $u_\mu\geq 0$, by solving \eqref{eq:gammanu1} with respect to $u_\nu$ and comparing the result with \eqref{eq:uv}, we obtain that $u_\mu = v_\lam$ and $u_\lam = v_\mu$. We then conclude that the L\"uders observables \eqref{eq:Luders2} coincide for $(\lam,\mu)\in\Gamma_d\cap\Rb^2_+$, where $\Rb_+ = [0,+\infty)$. Hence, for all $(\lam,\mu)\in\Gamma_d$, we can define a joint observable $\Co$ of $\Ao_\lam$ and $\Bo_\mu$ as in \eqref{eq:Luders1} without any ambiguity.

Having constructed a joint observable $\Co$, we are now in position to state the first main result of the paper.

\begin{theorem}\label{thm:extreme_Gamma_d}
Suppose $(\lam,\mu)\in\Gamma_d$, where the curve $\Gamma_d$ is defined in \eqref{eq:ellipse}. Then, the pair of observables $(\Ao_\lam,\Bo_\mu)$ is an extreme point of the convex set $\cOO$ if and only if both the following conditions are satisfied:
\begin{enumerate}[(i)]
\item $\lam\neq 0$ and $\mu\neq 0$;\label{it:a_thm_extreme_Gamma_d}
\item the $d^2\times d^2$ real symmetric matrix $\Lam$ with entries\label{it:b_thm_extreme_Gamma_d}
\equasi{eq:Haagerup_matrix}{
\Lambda_{(x,y),(z,t)} = d\,\Re\left[\ip{\psi_y}{\phii_z}\ip{\phii_z}{\psi_t}\ip{\psi_t}{\phii_x}\ip{\phii_x}{\psi_y}\right]
}
does not have $-1$ among its eigenvalues. (Here we regard $\Lambda$ as a square matrix in which rows are indexed by elements $(x,y)\in\Omega^2$ and columns by elements $(z,t)\in\Omega^2$.)
\end{enumerate}
\end{theorem}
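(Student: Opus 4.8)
The plan is to reduce the statement to a linear‑algebra question about the joint observable already built in Section~\ref{sec:higher-d}, and then to compute. First I would exploit the symmetry that exchanges $\lam\leftrightarrow\mu$ and $\Ao\leftrightarrow\Bo$: under it the matrix $\Lam$ is conjugated by a permutation of $\Omega^2$, so its spectrum is unchanged, and since every point of $\Gamma_d$ satisfies $\lam+\mu\geq(d-3)/(d-1)\geq 0$ we may assume $\mu\geq 0$. For such $(\lam,\mu)\in\Gamma_d$ the L\"uders joint observable $\Co(x,y)=\Ao_\lam(x)^{1/2}\Bo(y)\Ao_\lam(x)^{1/2}$ of \eqref{eq:Luders1}--\eqref{eq:Luders2} is available; since $\Bo(y)$ is a rank‑one projection and $\Ao_\lam(x)^{1/2}$ never annihilates $\psi_y$ (the bases being unbiased), each $\Co(x,y)=\kb{\eta_{xy}}{\eta_{xy}}$ is rank one, with $\eta_{xy}=\Ao_\lam(x)^{1/2}\psi_y\neq 0$.

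\emph{Step 1 (reduction).} I would show that $(\Ao_\lam,\Bo_\mu)\in\ext\cOO$ iff the $d^2$ operators $\{\Co(x,y)\mid x,y\in\Omega\}$ are linearly independent in the $d^2$‑dimensional real space of self‑adjoint operators on $\hh$. Necessity is precisely Proposition~\ref{prop:ext->indep}. For sufficiency, linear independence of rank‑one operators forces $\Co$ to be an extreme point of $\OO{\Omega^2}$ (no nonzero $D(x,y)$ with $\ran D(x,y)\subseteq\ran\Co(x,y)$ and $\sum_{x,y}D(x,y)=0$ exists), and by the characterization of extreme compatible pairs in \cite{GuCu18} a pair possessing an extreme joint observable is itself extreme in $\cOO$. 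Thus everything reduces to deciding when $\{\Co(x,y)\}$ is linearly independent, equivalently when its Gram matrix $G_{(x,y),(z,t)}=\tr{\Co(x,y)\Co(z,t)}=\abs{\ip{\eta_{xy}}{\eta_{zt}}}^2$ is nonsingular.

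\emph{Step 2 (the computation).} Using $\Ao_\lam(x)^{1/2}=u_\lam\Ao(x)+(v_\lam/\sqrt d)\,\id$ from \eqref{eq:uv}, the unbiasedness relations $\abs{\ip{\phii_x}{\psi_y}}^2=1/d$ and $\sum_t\ip{\phii_z}{\psi_t}\ip{\psi_t}{\phii_x}=\delta_{xz}$, together with \eqref{eq:propr_uv} and the identity $\gamma_\lam=\mu$ of \eqref{eq:gammanu}, I would evaluate $G$ in the four regimes of index coincidence: $G_{(x,y),(z,t)}$ equals $1/d^2$ if $x=z,\ y=t$; equals $\lam^2/d^2$ if $x=z,\ y\neq t$; equals $\mu^2/d^2$ if $x\neq z,\ y=t$; and a fixed positive multiple of $\tfrac1d+\Lam_{(x,y),(z,t)}$ if $x\neq z,\ y\neq t$. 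Then I would split $\Rb^{\Omega^2}=V_0\oplus V_1$, with $V_0$ the span of the vectors constant in the first or in the second index (dimension $2d-1$) and $V_1=V_0^\perp$ the doubly row‑ and column‑centered vectors (dimension $(d-1)^2$); both $\Lam$ and $G$ are block‑diagonal for this splitting. On $V_0$ one checks from completeness $\sum_t\ip{\phii_z}{\psi_t}\ip{\psi_t}{\phii_x}=\delta_{xz}$ that $\Lam$ acts as the identity, while $G$ is scalar on each of the three natural orthogonal pieces of $V_0$, with eigenvalues $1/d$, $\lam^2/d$, $\mu^2/d$; hence $G|_{V_0}$ is nonsingular exactly when $\lam\neq 0$ and $\mu\neq 0$, i.e.\ condition~(i). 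On $V_1$ the operators $\id\otimes J$, $J\otimes\id$, $J\otimes J$ (where $J$ is the all‑ones matrix on $\Rb^\Omega$) all vanish, and substituting the equation \eqref{eq:ellipse} of $\Gamma_d$ — which is precisely the relation making the two surviving coefficients coincide — gives $d^2\,G|_{V_1}=2u_\lam^2v_\lam^2\,(\id+\Lam)|_{V_1}$ with $u_\lam^2v_\lam^2=(1-\lam)(1-\mu)>0$ under condition~(i); so $G|_{V_1}$ is nonsingular exactly when $-1\notin\mathrm{spec}\,\Lam$, i.e.\ condition~(ii). Combining the two blocks yields the theorem.

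The step I expect to be most delicate is the Gram‑matrix computation: keeping straight the four coincidence regimes, correctly carrying the constants $u_\lam,v_\lam$ through \eqref{eq:propr_uv} and \eqref{eq:gammanu}, and — the crucial point — recognising that the defining equation of the ellipse $\Gamma_d$ is exactly what forces the $V_1$‑block of $G$ to be proportional to $\id+\Lam$ rather than to $c\,\id+c'\Lam$ with $c\neq c'$. A secondary point requiring care is pinning down the exact form of the extremality criterion borrowed from \cite{GuCu18} and the clean reduction in Step~1 (including, if needed, the uniqueness of the joint observable along $\Gamma_d$); once that interface is fixed, the remainder is the linear algebra above.
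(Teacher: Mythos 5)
Your proposal is correct, and Step 1 coincides with the paper's reduction: necessity via Prop.~\ref{prop:ext->indep}, sufficiency via the rank-one extremality argument of Prop.~\ref{prop:indep+rk1->ext} together with the criterion from \cite{GuCu18}, so that everything hinges on the linear independence of the L\"uders effects $\{\Co(x,y)\}$ (and the reduction to $\mu\geq 0$ by the exchange symmetry is exactly what the paper does). Where you genuinely diverge is the linear-algebra core. The paper's Lemma~\ref{lem:K+E} factorizes $\Co(x,y)=\sum_{z,t}K_{(x,y),(z,t)}E(z,t)$ with $E(z,t)=\tfrac12[\Ao(z)\Bo(t)+\Bo(t)\Ao(z)]$, so that the two conditions decouple cleanly: the Gram matrix of the $E$'s is $\tfrac{1}{2d}(\delta+\Lam)$, giving condition (ii), while $K$ is a four-term combination of $P^{\perp}\otimes P^{\perp}$, $P^{\perp}\otimes P$, $P\otimes P^{\perp}$, $P\otimes P$ whose invertibility reduces to four scalar inequalities that, for the L\"uders coefficients $a=u_\lam^2/d$, $b=v_\lam^2/d$, $c=u_\lam v_\lam/\sqrt d$, $e=0$, become condition (i) via \eqref{eq:propr_uv}; no use of the ellipse equation is needed at that stage. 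You instead compute the Gram matrix $G$ of the $\Co(x,y)$ themselves and block-diagonalize it over $V_0\oplus V_1$. I checked your four coincidence regimes ($1/d^2$, $\lam^2/d^2$, $\mu^2/d^2$, and $\tfrac{2u_\lam^2v_\lam^2}{d^2}(\tfrac1d+\Lam_{(x,y),(z,t)})$), the identity $\Lam|_{V_0}=\id$, the eigenvalues $1/d$, $\lam^2/d$, $\mu^2/d$ on the three pieces of $V_0$, and the identity $d^2G|_{V_1}=2(1-\lam)(1-\mu)(\id+\Lam)|_{V_1}$, which indeed follows from the defining equation of $\Gamma_d$ in the form $1-\lam^2-\mu^2=\tfrac{2(d-2)}{d}(1-\lam)(1-\mu)$: all correct. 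Your route buys a self-contained verification that the ellipse $\Gamma_d$ is exactly the locus making the $V_1$-block proportional to $\id+\Lam$, at the price of a heavier computation and of entangling the $(\lam,\mu)$-dependence with the MUB-dependence, which the paper's factorization keeps separate (and which makes Lemma~\ref{lem:K+E} reusable, as the paper does for the exceptional point $\bgam_d$ in Thm.~\ref{thm:gamma_d}). Two minor points to tighten: the multiple $2u_\lam^2v_\lam^2/d^2$ in the fourth regime is only nonnegative, vanishing when $\lam\in\{0,1\}$, but in that degenerate case the $V_0$-block is already singular so the equivalence survives; and the interface with \cite{GuCu18} that you flag is used by the paper in exactly the form you assume.
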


The matrix $\Lambda$ -- or, more precisely, its complex version defined without taking the real part in \eqref{eq:Haagerup_matrix} -- was introduced for the first time by Haagerup in order to determine all orthogonal maximal abelian $\ast$-subalgebras of the complex $d\times d$ matrices (see \cite[Remark 2.10]{Haagerup96}). This problem is related to the classification of all complex $d\times d$ Hadamard matrices (see \cite[Eq.~(11) and Lemma 2.1]{TaZy06}) and thus of all pairs of MUB in dimension $d$. Note that $\Lam$ is unaltered by rephasing the two MUB $\{\phii_x\mid x\in\Omega\}$ and $\{\psi_y\mid y\in\Omega\}$, while relabeling or even exchanging them only permutes its row and column indices. In particular, the spectrum ${\rm sp}(\Lam)$ solely depends on the {\em equivalence class} of the (unordered) pair of MUB at hand \cite[Sec.~5.1]{DuEnBeZy10}.

In Sec.~\ref{subsec:Fourier} below, we will provide a whole family of MUB for which ${\rm sp}(\Lam)$ can be analytically determined; this family is made up of all pairs of MUB that are conjugate by the Fourier transform of a finite abelian group. In the subsequent Sec.~\ref{subsec:numeric}, by means of computer-assisted calculations, we will test the condition $-1\in {\rm sp}(\Lam)$ for some other examples given in \cite{TaZy06}.

The proof of Thm.~\ref{thm:extreme_Gamma_d} relies on the next partial converse of Prop.~\ref{prop:ext->indep}, along with the subsequent more specific technical lemma.

\begin{proposition}\label{prop:indep+rk1->ext}
Suppose $\Ao,\Bo\in\OO{\Omega}$ are any two compatible observables, and let $\Co$ be a joint observable of $\Ao$ and $\Bo$. If the operators $\{\Co(x,y)\mid x,y\in\Omega\}$ are linearly independent and $\rank{\Co(x,y)} = 1$ for all $x,y\in\Omega$, then $(\Ao,\Bo)$ is an extreme point of $\cOO$.
\end{proposition}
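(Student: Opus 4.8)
The plan is to obtain the extremality of $(\Ao,\Bo)$ in $\cOO$ from the extremality of the given joint observable $\Co$ inside the larger convex body $\OO{\Omega^2}$: the rank‑one and linear‑independence hypotheses will pin $\Co$ down as an extreme point of $\OO{\Omega^2}$, and the passage from there to $\ext{\cOO}$ will be handled by \cite[Cor.~5]{GuCu18}, the same device already exploited in the proof of Prop.~\ref{prop:ext->indep}.

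\textbf{Step 1: $\Co\in\ext{\OO{\Omega^2}}$.} I would use the standard extremality criterion for observables: $\Co$ fails to be an extreme point of $\OO{\Omega^2}$ precisely when there is a nonzero self‑adjoint map $D:\Omega^2\to\lh$ with $\sum_{x,y}D(x,y)=0$ and $\ran D(x,y)\subseteq\ran\Co(x,y)$ for all $x,y$; indeed, for such a $D$ and small $\eps>0$ the maps $\Co\pm\eps D$ belong to $\OO{\Omega^2}$ and give a nontrivial convex decomposition of $\Co$, and conversely every such decomposition of $\Co$ produces such a $D$. Now each $\Co(x,y)$ is nonzero, being a member of a linearly independent family, and has rank one, so $\ran\Co(x,y)$ is a line; together with $D(x,y)^{\ast}=D(x,y)$, the inclusion $\ran D(x,y)\subseteq\ran\Co(x,y)$ forces $D(x,y)=\alpha_{x,y}\,\Co(x,y)$ with $\alpha_{x,y}\in\Rb$. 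The resulting relation $\sum_{x,y}\alpha_{x,y}\,\Co(x,y)=0$ and the assumed linear independence of $\{\Co(x,y)\mid x,y\in\Omega\}$ yield $\alpha_{x,y}=0$ for every $x,y$, i.e., $D=0$; hence $\Co$ is an extreme point of $\OO{\Omega^2}$. This step is routine bookkeeping with ranges and the defining linear relation.

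\textbf{Step 2: transfer to $\cOO$, and the main obstacle.} Take any decomposition $(\Ao,\Bo)=\tfrac12[(\Ao',\Bo')+(\Ao'',\Bo'')]$ with $(\Ao',\Bo'),(\Ao'',\Bo'')\in\cOO$; the goal is to show the two summands coincide. Here I would invoke \cite[Cor.~5]{GuCu18}, which ties the facial structure of $\cOO$ at $(\Ao,\Bo)$ to that of $\OO{\Omega^2}$ through a joint observable: applied with the joint observable $\Co$ at hand, it should match the displayed decomposition of $(\Ao,\Bo)$ with a decomposition $\Co=\tfrac12(\Co'+\Co'')$ in $\OO{\Omega^2}$ whose pieces have margins $(\Co'_{[1]},\Co'_{[2]})=(\Ao',\Bo')$ and $(\Co''_{[1]},\Co''_{[2]})=(\Ao'',\Bo'')$. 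Combined with $\Co\in\ext{\OO{\Omega^2}}$ from Step~1, this forces $\Co'=\Co''=\Co$, hence $(\Ao',\Bo')=(\Ao'',\Bo'')=(\Ao,\Bo)$, which is the claim. I expect this second step to be the only delicate one: extremality of $\Co$ inside $\OO{\Omega^2}$ is not, for an arbitrary joint observable, inherited by the pair of its margins, so one genuinely needs the bridge of \cite[Cor.~5]{GuCu18} between admissible perturbations $D$ of a joint observable --- those constrained by $\ran D(x,y)\subseteq\ran\Co(x,y)$ --- and the induced perturbations of the margin pair; and it is exactly the rank‑one hypothesis that makes this support constraint rigid, forcing $D(x,y)\in\Rb\,\Co(x,y)$, so that the argument can close.
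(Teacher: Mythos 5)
Your Step 1 is sound and is essentially the paper's own first step: the rank\nobreakdash-one hypothesis forces every admissible perturbation (equivalently, every component of a convex decomposition) of $\Co$ to be of the form $c(x,y)\,\Co(x,y)$ with real $c$, and linear independence together with the normalization $\sum_{x,y}\Co(x,y)=\id$ then kills it, so $\Co\in\ext{\OO{\Omega^2}}$. The problem is Step 2, which you correctly single out as the delicate point but do not actually close. Knowing that \emph{one} joint observable of $(\Ao,\Bo)$ is extremal in $\OO{\Omega^2}$ does not let you run the lifting you sketch: given $(\Ao,\Bo)=\tfrac12[(\Ao',\Bo')+(\Ao'',\Bo'')]$ in $\cOO$, you may choose joints $\Co'$ of $(\Ao',\Bo')$ and $\Co''$ of $(\Ao'',\Bo'')$, and $\tfrac12(\Co'+\Co'')$ is then \emph{some} joint observable of $(\Ao,\Bo)$ --- but nothing forces it to coincide with the particular $\Co$ you proved extremal, unless $\Co$ is the \emph{unique} joint observable of the pair. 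Uniqueness does not follow from the stated hypotheses, and in fact the implication you need (and the proposition as stated) fails. Take $d=2$, $\Omega=\{+,-\}$, and the qubit tetrahedron POVM $\Co(x,y)=\tfrac14(\id+\va_{x,y}\cdot\vsigma)$ with $\va_{+,+}=\tfrac{1}{\sqrt3}(1,1,1)$, $\va_{+,-}=\tfrac{1}{\sqrt3}(1,-1,-1)$, $\va_{-,+}=\tfrac{1}{\sqrt3}(-1,1,-1)$, $\va_{-,-}=\tfrac{1}{\sqrt3}(-1,-1,1)$. Its four effects are rank one and linearly independent, so $\Co\in\ext{\OO{\Omega^2}}$; its margins are $\Ao_{1/\sqrt3}$ and $\Bo_{1/\sqrt3}$ for the two MUB along the Bloch axes $\ve_1$ and $\ve_2$. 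Since $(1/\sqrt3)^2+(1/\sqrt3)^2=2/3<1$, for small $t>0$ both $(\Ao_{1/\sqrt3\pm t},\Bo_{1/\sqrt3})$ are compatible (use \eqref{eq:joint_qubit}), and $(\Ao_{1/\sqrt3},\Bo_{1/\sqrt3})$ is their midpoint; hence these margins are \emph{not} extremal in $\cOO$, even though they admit an extremal, rank-one, linearly independent joint. (This is already implicit in the paper: Sec.~\ref{sec:warm-up} shows no noisy qubit MUB pair is ever extremal.)

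To be fair, the paper's own proof makes exactly the same jump --- ``$\Co$ is extremal in $\OO{\Omega^2}$, hence its marginals are extremal in $\cOO$ by [Cor.~5, GuCu18]'' --- so you have faithfully reproduced it; but whatever the precise wording of that corollary, it cannot deliver ``one extremal joint $\Rightarrow$ extremal margins,'' since the example above refutes that implication. The correct bridge characterizes extremality of $(\Ao,\Bo)$ in terms of \emph{all} of its joint observables, equivalently in terms of a \emph{unique} joint observable that is extremal. The missing ingredient, both in your write-up and in the proposition itself, is therefore the hypothesis that $\Co$ is the unique joint observable of $(\Ao,\Bo)$: with it, any decomposition of the margins lifts to a decomposition of $\Co$ and your Step~1 finishes the argument. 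In the paper's actual applications this uniqueness is available separately (e.g.\ from the results of [CaHeTo12] quoted in Sec.~\ref{subsec:Fourier}), so the downstream theorems can presumably be repaired, but as a statement about arbitrary compatible pairs the proposition needs that extra assumption.
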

\begin{proof}
We show that the observable $\Co$ is an extreme point of the convex set $\OO{\Omega^2}$, and then its marginals $(\Ao,\Bo)$ are extremal in $\cOO$ by \cite[Cor.~5]{GuCu18}. Indeed, suppose that $\Co = \lam_+\Co^+ + \lam_-\Co^-$ for some $\Co^+,\Co^-\in\OO{\Omega^2}$ and $\lam_+,\lam_- \in (0,1)$ with $\lam_+ + \lam_- = 1$. It follows that $0\leq \Co^\pm(x,y)\leq (1/\lam_\pm)\Co(x,y)$, hence by the rank-$1$ hypothesis there exist functions $c^\pm:\Omega^2\to\Rb_+$ such that $\Co^\pm(x,y) = c^\pm(x,y)\Co(x,y)$ for all $x,y$. Combining the normalization condition
$$
\sum_{x,y\in\Omega} c^\pm(x,y)\Co(x,y) = \sum_{x,y\in\Omega} \Co^\pm(x,y) = \id = \sum_{x,y\in\Omega} \Co(x,y)
$$
with the assumed linear independence of the set $\{\Co(x,y)\mid x,y\in\Omega\}$, it follows that $c^+ = c^- = 1$. Thus, $\Co^+ = \Co^- = \Co$, which proves the claimed extremality of $\Co$.
\end{proof}

\begin{lemma}\label{lem:K+E}
Suppose $\Ao,\Bo\in\OO{\Omega}$ are the two sharp observables defined in \eqref{eq:MUB}, and, for constant numbers $a,b,c,e\in\Rb$, let
$$
C(x,y) = a\Ao(x) + b\Bo(y) + c[\Ao(x)\Bo(y) + \Bo(y)\Ao(x)] + e\id \qquad \forall x,y\in\Omega\,.
$$
Then, the operators $\{C(x,y)\mid x,y\in\Omega\}$ are linearly independent if and only if both condition \eqref{it:b_thm_extreme_Gamma_d} of Thm.~\ref{thm:extreme_Gamma_d} hold and all the following inequalities are satisfied:
\equasi{eq:Kinvert}{
c \neq 0\,, \qquad da+2c\neq 0\,, \qquad db+2c\neq 0\,, \qquad da+db+2c+d^2 e\neq 0\,.
}
\end{lemma}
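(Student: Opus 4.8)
The plan is to set up the linear map $T$ on the space of functions $c:\Omega^2\to\Cb$ defined by $(Tc)(x,y)$ being the coefficient... actually better: consider the $\Cb$-linear map $\Phi:\Cb^{\Omega^2}\to\lh$ given by $\Phi(c)=\sum_{x,y}c(x,y)C(x,y)$. The operators $\{C(x,y)\}$ are linearly independent precisely when $\Phi$ is injective, i.e.\ when its $d^2\times d^2$ Gram-type matrix — or more conveniently the matrix of $\Phi^*\Phi$ with respect to the Hilbert--Schmidt inner product on $\lh$ — is invertible. So the first step is to compute $\tr{C(x,y)^*C(z,t)}$ explicitly. Using $\Ao(x)\Bo(y)\Ao(x)=(1/d)\Ao(x)$, $\tr{\Ao(x)\Ao(z)}=\delta_{xz}$, $\tr{\Ao(x)\Bo(y)}=1/d$, $\tr{\Ao(x)\Bo(y)\Ao(z)}=(1/d)\ip{\phii_z}{\phii_x}\ip{\phii_x}{\phii_z}=\ldots$, and similar MUB identities, the Gram matrix should decompose as a sum of a few rank-structured pieces plus the term $c^2$ times something governed by $\tr{[\Ao(x)\Bo(y)+\Bo(y)\Ao(x)][\Ao(z)\Bo(t)+\Bo(t)\Ao(z)]}$, and the latter is exactly where the Haagerup matrix $\Lam$ enters: expanding the product and taking the trace yields terms like $\ip{\psi_y}{\phii_z}\ip{\phii_z}{\psi_t}\ip{\psi_t}{\phii_x}\ip{\phii_x}{\psi_y}$ whose real part (the trace being real here) is $\Lambda_{(x,y),(z,t)}/d$.

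The second step is to diagonalize the resulting structured matrix. Write $P$ for the rank-$1$ projection onto the constant function in $\Cb^\Omega$ and work in the tensor decomposition $\Cb^{\Omega^2}=\Cb^\Omega\otimes\Cb^\Omega$. One expects the Gram matrix to have the block form
\begin{equation*}
G = \alpha\,\id\otimes\id + \beta\,\id\otimes P + \beta'\,P\otimes\id + \delta\,P\otimes P + \frac{c^2}{d}\,\Lam'
\end{equation*}
for suitable scalars depending on $a,b,c,e$ (with the pure $c^2$-coefficient splitting off the "$\id\otimes\id$" part of $\Lam$, since $\Lam$ itself contains a multiple of the identity coming from the $z=x,t=y$ diagonal contribution). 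The key algebraic fact to isolate is that the "interaction" matrix built from the cross term $[\Ao\Bo+\Bo\Ao]$ is, after subtracting off its obvious identity and rank-one parts, a positive multiple of $(\id+\Lam_{\mathrm{reduced}})$ or similar — so that invertibility of $G$ on the subspace orthogonal to the various constant-function subspaces is equivalent to $-1\notin\mathrm{sp}(\Lam)$, and invertibility on the low-dimensional subspaces spanned by constants reduces to the four scalar non-vanishing conditions $c\neq0$, $da+2c\neq0$, $db+2c\neq0$, $da+db+2c+d^2e\neq0$. Concretely, restricting $\Phi$ to functions of the form $c(x,y)=f(x)$ gives $\sum_x f(x)[a\Ao(x)+\frac{2c}{d}\Ao(x)+(b\frac1d+e)\id\cdot?]$ — tracking these restrictions produces exactly the listed linear combinations as the "eigenvalues" on the constant directions, while the genuinely $d^2$-dimensional part sees only $c^2$ times $(\text{something})\cdot(\id+\Lam)$.

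The main obstacle I anticipate is the bookkeeping in the second step: correctly identifying which of the four "constant-function" subspaces ($P\otimes P$, $P\otimes P^\perp$, $P^\perp\otimes P$, $P^\perp\otimes P^\perp$) each scalar condition controls, and verifying that $\Lam$ restricted to $P^\perp\otimes P^\perp$ together with its action elsewhere is exactly normalized so that the eigenvalue $-1$ of the full $d^2\times d^2$ matrix $\Lam$ (condition \eqref{it:b_thm_extreme_Gamma_d}) is the only spectral obstruction beyond \eqref{eq:Kinvert} — in particular one must check that $\Lam$ has $d$ and other "trivial" eigenvalues that never cause a problem, and that the $c^2$-weight does not degenerate. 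A clean way to handle this is to note that $\Phi$ factors as $c\mapsto \sum c(x,y)\,\kb{\phii_x}{\phii_x}\cdot(\ldots)$ is awkward, so instead I would split $C(x,y)=c\,[\Ao(x)\Bo(y)+\Bo(y)\Ao(x)] + R(x,y)$ where $R(x,y)=a\Ao(x)+b\Bo(y)+e\id$ has range in the $(2d-1)$-dimensional span of $\{\Ao(x),\Bo(y),\id\}$, compute the Gram matrix of $R$ alone (a small rank matrix, diagonalized by elementary means, giving the four conditions \eqref{eq:Kinvert} in the degenerate cases), and show the cross term and the $c^2$-term together contribute, on the complement, precisely $c^2\,d\,(\id+\Lam_{\perp})$ up to positive scalar — reducing the whole lemma to a rank computation that is then assembled by a Schur-complement argument. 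Throughout, one repeatedly uses the MUB identities $\Ao(x)\Bo(y)\Ao(x)=(1/d)\Ao(x)$ and $\sum_x\Ao(x)=\sum_y\Bo(y)=\id$, exactly as in the derivation of \eqref{eq:Luders2}.
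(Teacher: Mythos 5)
Your overall strategy --- test injectivity of $c\mapsto\sum_{x,y}c(x,y)C(x,y)$ via its Gram matrix in the Hilbert--Schmidt inner product, watch the Haagerup matrix emerge from the cross term $\tr{[\Ao(x)\Bo(y)+\Bo(y)\Ao(x)][\Ao(z)\Bo(t)+\Bo(t)\Ao(z)]}$, and read off the four scalar conditions from the constant-function subspaces $P\otimes P$, $P\otimes P^\perp$, $P^\perp\otimes P$ --- is viable and contains all the right ingredients. But as written it is a plan rather than a proof: the step you yourself flag as ``the main obstacle'' is exactly the step that is not done, and it is not mere bookkeeping. Your guessed block form $G=\alpha\,\id\otimes\id+\beta\,\id\otimes P+\beta'\,P\otimes\id+\delta\,P\otimes P+(c^2/d)\Lam'$ is only valid because of two specific identities you never establish: $\sum_z\Lambda_{(x,y),(z,t)}=\delta_{y,t}$ and $\sum_t\Lambda_{(x,y),(z,t)}=\delta_{x,z}$ (immediate from $\sum_z\kb{\phii_z}{\phii_z}=\id$ and $\abs{\ip{\phii_x}{\psi_y}}^2=1/d$). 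These say that $\Lam$ commutes with $P\otimes\id$ and $\id\otimes P$ and acts as the \emph{identity} on their ranges. That fact is doing double duty: (a) it is what allows the Gram matrix to split into the four blocks so that each scalar condition in \eqref{eq:Kinvert} controls one block; and (b) it is what guarantees that invertibility of $\id+\Lam$ on $\ran(P^\perp\otimes P^\perp)$ --- which is all the $c^2$-block actually sees --- is equivalent to $-1\notin{\rm sp}(\Lam)$ on the \emph{full} space, i.e.\ to condition \eqref{it:b_thm_extreme_Gamma_d} as stated. Without (b) you could only conclude a condition on a restriction of $\Lam$, not on $\Lam$ itself. Several of your intermediate trace identities are also garbled (e.g.\ $\tr{\Ao(x)\Bo(y)\Ao(z)}=\delta_{x,z}/d$, not the expression you wrote), and the $\delta_{x,z}\delta_{y,t}$ contribution to the $E$--$E$ overlap, which produces the ``$\id+$'' in $\id+\Lam$, is only alluded to.

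The paper reaches the same destination by a cleaner route that you may want to compare with. Instead of computing the Gram matrix of $C$ directly, it writes $C(x,y)=\sum_{z,t}K_{(x,y),(z,t)}E(z,t)$ with $E(z,t)=\tfrac12[\Ao(z)\Bo(t)+\Bo(t)\Ao(z)]$ and $K_{(x,y),(z,t)}=a\delta_{x,z}+b\delta_{y,t}+2c\delta_{x,z}\delta_{y,t}+e$ (using $\sum_t E(z,t)=\Ao(z)$, $\sum_z E(z,t)=\Bo(t)$, $\sum_{z,t}E(z,t)=\id$). Linear independence of the $C$'s then factors exactly into: linear independence of the $E$'s, whose Gram matrix is $\tfrac1{2d}(\id+\Lam)$ and hence gives condition \eqref{it:b_thm_extreme_Gamma_d} on the full matrix with no restriction issue; and invertibility of $K$, whose spectral decomposition in terms of $P^{\perp}\otimes P^{\perp}$, $P^{\perp}\otimes P$, $P\otimes P^{\perp}$, $P\otimes P$ gives precisely the four inequalities \eqref{eq:Kinvert}. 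This decouples the Haagerup-matrix condition from the scalar conditions from the outset and avoids having to track how $K$ and $\Lam$ interact, which is the unresolved heart of your version. If you prefer to keep your direct route, supply the two row/column-sum identities above and the computation $G_C=K\,\tfrac1{2d}(\id+\Lam)\,K$ together with the commutation of $\Lam$ with the four projections; the rest then assembles as you describe.
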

\begin{proof}
We rewrite
$$
C(x,y) = \sum_{z,t\in\Omega} K_{(x,y),(z,t)} E(z,t) \,,
$$
where
\begin{align*}
E(z,t) & = \frac{1}{2}\left[ \Ao(z)\Bo(t) + \Bo(t)\Ao(z) \right] \,, \\
K_{(x,y),(z,t)} & = a\delta_{x,z} + b\delta_{y,t} + 2c\delta_{x,z}\delta_{y,t} + e
\end{align*}
and $\delta_{x,z}$, $\delta_{y,t}$ are the usual Kronecker deltas.
The linear independence of the $d^2$ operators $\{C(x,y)\mid x,y\in\Omega\}$ then amounts to the linear independence of the $d^2$ operators $\{E(x,y)\mid x,y\in\Omega\}$ and the invertibility of the $d^2\times d^2$ transition matrix $K$ with entries $K_{(x,y),(z,t)}$. The first of the latter two conditions is equivalent to condition \eqref{it:b_thm_extreme_Gamma_d} of Thm.~\ref{thm:extreme_Gamma_d}, since
$$
\tr{E(z,t)^*E(x,y)} = \frac{1}{2d}\left( \delta_{x,z}\delta_{y,t} + \Lambda_{(x,y),(z,t)} \right) \,.
$$
In order to unravel the second condition, we simplify the expression for $K$ by introducing the $d\times d$ rank-$1$ orthogonal projection matrix $P$ with $P_{x,z} = 1/d \ \ \forall x,y \in\Omega$ and its orthogonal complement $P^\perp = \id - P$. This gives the following spectral decomposition of $K$:
$$
K = 2c\, P^\perp\otimes P^\perp + (da+2c)\, P^\perp\otimes P + (db+2c)\, P\otimes P^\perp + (da+db+2c+d^2 e)\, P\otimes P\,.
$$
From it, we see that $K$ is invertible if and only if all inequalities \eqref{eq:Kinvert} are satisfied. This concludes the proof of the lemma.
\end{proof}

\begin{proof}[of Thm.~\ref{thm:extreme_Gamma_d}]
We only prove the case $\mu\geq 0$, the case $\lam\geq 0$ being symmetric. If $\mu\geq 0$, we have seen that a joint observable of $\Ao_\lam$ and $\Bo_\mu$ is the first of the two L\"uders observables defined in \eqref{eq:Luders2}, that is, the rank-$1$ observable $\Co$ given by
$$
\Co(x,y) = \frac{1}{d}\left[u^2_\lam\Ao(x) + v_\lam^2\Bo(y)\right] + \frac{u_\lam v_\lam}{\sqrt{d}}\left[\Ao(x)\Bo(y) + \Bo(y)\Ao(x)\right] \,.
$$
Moreover, $\mu = \gamma_\nu$ with $\gamma_\nu$ given by \eqref{eq:gammanu} in this case.
By Props.~\ref{prop:ext->indep} and \ref{prop:indep+rk1->ext}, $(\Ao_\lam,\Bo_\mu)\in\ext{\cOO}$ if and only if the operators $\{\Co(x,y)\mid x,y\in\Omega\}$ are linearly independent. By Lemma \ref{lem:K+E}, this is equivalent to condition \eqref{it:b_thm_extreme_Gamma_d} of Thm.~\ref{thm:extreme_Gamma_d} along with inequalities \eqref{eq:Kinvert} for the constants $a = u^2_\lam / d$, $b = v^2_\lam / d$, $c = u_\lam v_\lam / \sqrt{d}$ and $e=0$. According to \eqref{eq:propr_uv}, we have $da+2c = 1-v^2_\lam$, $db+2c = 1-u^2_\lam$ and $da+db+2c+d^2 e = 1$, hence inequalities \eqref{eq:Kinvert} hold if and only if $u_\lam \neq 0$, $v_\lam \neq 0$, $u^2_\lam \neq 1$ and $v^2_\lam \neq 1$. By \eqref{eq:uv} and \eqref{eq:gammanu1}, we have $u_\lam = \pm v_\mu$ (with `$+$' if $\lam\geq 0$ and `$-$' if $\lam < 0$ by \eqref{eq:propr_uv}). Therefore, inequalities \eqref{eq:Kinvert} are equivalent to condition \eqref{it:a_thm_extreme_Gamma_d} of Thm.~\ref{thm:extreme_Gamma_d}. This concludes the proof of the theorem.
\end{proof}

\section{Examples}\label{sec:examples}

\subsection{Two Fourier conjugate MUB}\label{subsec:Fourier}
As a first example, we apply Thm.~\ref{thm:extreme_Gamma_d} to the case of two Fourier conjugate MUB; that is, we assume that $\Omega$ is an order $d$ abelian group and
\equasi{eq:MUB_Fourier}{
\ip{\phii_x}{\psi_y} = \frac{1}{\sqrt{d}} \pair{x}{y} ,
}
where $\pair{\cdot}{\cdot}$ is some non-degenerate symmetric bicharacter of $\Omega$. Here, we recall that a {\em non-degenerate symmetric bicharacter} of an abelian group $\Omega$ is any map $\pair{\cdot}{\cdot} : \Omega\times\Omega\to\{z\in\Cb\mid\abs{z} = 1\}$ such that
\begin{enumerate}[(i)]
\item $\pair{x}{y} = \pair{y}{x}$ for all $x,y\in\Omega$;
\item $\pair{x_1+x_2}{y} = \pair{x_1}{y}\pair{x_2}{y}$ for all $x_1,x_2,y\in\Omega$, where the composition law of $\Omega$ is written additively;
\item denoting by $\widehat{\Omega}$ the dual group of $\Omega$ \cite[Ch.~I, \S 9]{LangAlgebra}, the mapping $x\mapsto\pair{x}{\cdot}$ is a group isomorphism of $\Omega$ onto $\widehat{\Omega}$.
\end{enumerate}

The case with $\Omega$ being the cyclic group $\Zb_d = \{0,1,\ldots,d-1\}$ endowed with the bicharacter $\pair{x}{y} = \rme^{2\pi\rmi xy/d}$ was already treated in \cite{CaHeTo12}. In particular, \cite[Prop.~5]{CaHeTo12} states that in this case the observables $\Ao_\lam$ and $\Bo_\mu$ have a unique joint observable $\Co$ whenever the point $(\lam,\mu)$ lies in $\Gamma_d\cap\Rb^2_+$, and for such observable $\rank{\Co(x,y)} = 1$ for all $x,y$. Further, by \cite[Prop.~9]{CaHeTo12} the operators $\{\Co(x,y) \mid x,y\in\Omega\}$ are linearly independent if and only if $d$ is an odd number and $(\lam,\mu)\in\Gamma_d\cap\Rb^2_{+\ast}$, where $\Rb_{+\ast} = (0,+\infty)$. Combining these earlier results with Props.~\ref{prop:ext->indep} and \ref{prop:indep+rk1->ext} of the present paper, we conclude that in the cyclic group case a point $(\lam,\mu)\in\Gamma_d\cap\Rb^2_{+\ast}$ yelds a pair $(\Ao_\lam,\Bo_\mu)\in\ext{\cOO}$ if and only if $d$ is odd. The next theorem extends this result to arbitrary abelian groups $\Omega$ and points $(\lam,\mu)\in\Gamma_d$ with $\lam\neq 0$ and $\mu\neq 0$.

\begin{theorem}\label{thm:Fourier}
Suppose $\Omega$ is an order $d$ abelian group endowed with the non-degenerate symmetric bicharacter $\pair{\cdot}{\cdot}$, and assume the two MUB $\{\phii_x\mid x\in\Omega\}$ and $\{\psi_y\mid y\in\Omega\}$ satisfy \eqref{eq:MUB_Fourier}. Further, let $(\lam,\mu)\in\Gamma_d$ with $\lam\neq 0$ and $\mu\neq 0$. Then, the pair of observables $(\Ao_\lam,\Bo_\mu)$ is an extreme point of the convex set $\cOO$ if and only if $d$ is odd.
\end{theorem}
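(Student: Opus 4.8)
The plan is to reduce everything to Theorem~\ref{thm:extreme_Gamma_d} and then to compute the spectrum of the Haagerup matrix $\Lam$ in the Fourier case. Since $(\lam,\mu)\in\Gamma_d$ with $\lam\neq 0$ and $\mu\neq 0$, condition~\eqref{it:a_thm_extreme_Gamma_d} of Theorem~\ref{thm:extreme_Gamma_d} is automatically satisfied, so by that theorem $(\Ao_\lam,\Bo_\mu)\in\ext\cOO$ if and only if $-1\notin\mathrm{sp}(\Lam)$. It therefore suffices to prove that $-1\in\mathrm{sp}(\Lam)$ precisely when $d$ is even.

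First I would substitute \eqref{eq:MUB_Fourier} into the defining formula \eqref{eq:Haagerup_matrix}. Using the bicharacter identities $\pair{x}{y}\,\overline{\pair{z}{y}}=\pair{x-z}{y}$ and $\pair{z}{t}\,\overline{\pair{x}{t}}=\overline{\pair{x-z}{t}}$, the fourfold product collapses and one obtains $\Lambda_{(x,y),(z,t)}=\tfrac1d\,\Re\pair{x-z}{y-t}$. The key structural remark is then that $\Lam$ is the matrix of the convolution operator on the finite abelian group $G=\Omega\times\Omega$ with kernel $h(a,b)=\tfrac1d\Re\pair{a}{b}$; in particular $\Lam$ is diagonalized by the characters of $G$, and its eigenvalues are the numbers $\sum_{(a,b)\in G}h(a,b)\,\overline{\chi(a,b)}$ as $\chi$ runs over $\widehat G$.

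Next I would evaluate these eigenvalues. Because $\pair{\cdot}{\cdot}$ is non-degenerate and symmetric, every character of $\Omega$ is $x\mapsto\pair{x}{s}$ for a unique $s\in\Omega$, hence every character of $G$ has the form $\chi_{s,r}(x,y)=\pair{x}{s}\pair{y}{r}$ with $(s,r)\in\Omega\times\Omega$. Splitting $\Re\pair{a}{b}=\tfrac12\big(\pair{a}{b}+\overline{\pair{a}{b}}\big)$ and applying the orthogonality relation $\sum_{a\in\Omega}\pair{a}{w}=d\,\delta_{w,0}$ to each of the two resulting sums, a short computation gives that the eigenvalue attached to $\chi_{s,r}$ equals $\Re\pair{s}{r}$. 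Hence $\mathrm{sp}(\Lam)=\{\Re\pair{s}{r}\mid s,r\in\Omega\}$, and since a complex number of modulus $1$ has real part $-1$ only if it equals $-1$, we conclude that $-1\in\mathrm{sp}(\Lam)$ if and only if $\pair{s}{r}=-1$ for some $s,r\in\Omega$.

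Finally I would settle this dichotomy, which is the real content of the statement. If $d$ is odd, then $\pair{s}{r}^{\,\mathrm{ord}(s)}=\pair{\mathrm{ord}(s)\,s}{r}=1$, so $\pair{s}{r}$ is a root of unity whose order divides $\mathrm{ord}(s)$, hence divides $d$ and is odd; in particular $\pair{s}{r}\neq-1$. If $d$ is even, then $\Omega$ contains an element $s$ of order $2$; the character $\pair{s}{\cdot}$ squares to the trivial one, so it takes values in $\{\pm1\}$, and it is nontrivial because $s\neq 0$ and $\pair{\cdot}{\cdot}$ is non-degenerate, whence $\pair{s}{r}=-1$ for some $r\in\Omega$. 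Combined with the reduction of the first paragraph, this proves the theorem. I expect the only mildly delicate points to be the bookkeeping in the convolution/Fourier computation and the two uses of non-degeneracy — to describe $\widehat G$ completely and to guarantee that $\pair{s}{\cdot}$ is nontrivial in the even case; everything else is routine.
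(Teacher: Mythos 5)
Your proposal is correct and follows essentially the same route as the paper: reduce to Theorem~\ref{thm:extreme_Gamma_d}, diagonalize $\Lam$ by the characters of $\Omega\times\Omega$ (the paper writes out the eigenvectors $w^{r,s}$ explicitly and splits $\Lam=\tfrac12(\Lam^++\Lam^-)$, which is the same computation as your convolution-operator argument), obtain ${\rm sp}(\Lam)=\{\Re\pair{s}{r}\}$, and settle the parity dichotomy by the same order/non-degeneracy considerations. No gaps.
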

\begin{proof}
We need to prove that under the hypotheses in the statement, condition \eqref{it:b_thm_extreme_Gamma_d} of Thm.~\ref{thm:extreme_Gamma_d} is satisfied if and only if $d$ is odd. This is done in the next proposition, which we split from the present proof for later purposes (see the proof of Thm.~\ref{thm:gamma_d} below).
\end{proof}

\begin{proposition}\label{prop:sp(Lam)-Fourier}
Suppose $\Omega$ is an order $d$ abelian group endowed with the non-de\-gen\-er\-ate symmetric bicharacter $\pair{\cdot}{\cdot}$, and assume the two MUB $\{\phii_x\mid x\in\Omega\}$ and $\{\psi_y\mid y\in\Omega\}$ satisfy \eqref{eq:MUB_Fourier}. Then, the matrix $\Lam$ defined in \eqref{eq:Haagerup_matrix} does not have $-1$ among its eigenvalues if and only if $d$ is odd.
\end{proposition}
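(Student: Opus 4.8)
The plan is to exploit the group structure to compute the whole spectrum of $\Lam$ explicitly. First I would substitute \eqref{eq:MUB_Fourier} into \eqref{eq:Haagerup_matrix}: writing $\ip{\phii_x}{\psi_y} = d^{-1/2}\pair{x}{y}$ and using $\overline{\pair{a}{b}} = \pair{-a}{b} = \pair{a}{-b}$ together with the bilinearity of the bicharacter, the product of the four inner products appearing in \eqref{eq:Haagerup_matrix} reduces to $d^{-2}\pair{z-x}{t-y}$, so that
\equano{
\Lambda_{(x,y),(z,t)} = \frac{1}{d}\,\Re\pair{z-x}{t-y} .
}
Thus the entry $\Lambda_{(x,y),(z,t)}$ depends only on the differences $z-x$ and $t-y$; in other words $\Lam$ is a group-circulant matrix over $\Omega\times\Omega$, and such matrices are diagonalized by the characters of $\Omega\times\Omega$.

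The second step is to carry out this diagonalization. Since the bicharacter is non-degenerate, every character of $\Omega$ equals $\pair{a}{\cdot}$ for a unique $a\in\Omega$; hence the $d^2$ functions $g_{a,b}(x,y) = \pair{a}{x}\pair{b}{y}$, with $(a,b)\in\Omega^2$, form an orthogonal basis of $\Cb^{\Omega^2}$. A routine computation --- substitute $z=x+u$, $t=y+v$, factor out $g_{a,b}(x,y)$, split $\Re\pair{u}{v} = \frac12(\pair{u}{v}+\pair{-u}{v})$, and use the orthogonality relation $\sum_{v\in\Omega}\pair{c}{v} = d\,\delta_{c,0}$ --- gives $\Lam\,g_{a,b} = \Re\pair{a}{b}\,g_{a,b}$. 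Therefore ${\rm sp}(\Lam) = \{\Re\pair{a}{b}\mid a,b\in\Omega\}$, counted with multiplicities.

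Finally I would read off the answer. Since $\abs{\pair{a}{b}} = 1$, we have $\Re\pair{a}{b} = -1$ if and only if $\pair{a}{b} = -1$, so the claim reduces to: there exist $a,b\in\Omega$ with $\pair{a}{b} = -1$ precisely when $d$ is even. If $d$ is even, Cauchy's theorem yields an element $a\in\Omega$ of order $2$; then $\pair{a}{\cdot}$ is a non-trivial character (by non-degeneracy) whose square is trivial, hence it is $\{\pm 1\}$-valued and not identically $1$, so $\pair{a}{b} = -1$ for some $b$. Conversely, the order of every $a\in\Omega$ divides $\abs{\Omega} = d$, so $\pair{a}{b}^d = \pair{da}{b} = \pair{0}{b} = 1$ for all $a,b$; when $d$ is odd this is incompatible with $\pair{a}{b} = -1$. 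This proves the proposition.

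There is no deep obstacle here: the argument is an explicit Fourier diagonalization over the abelian group $\Omega$ followed by an elementary parity observation. The one point that demands care is the first step --- the bookkeeping with complex conjugates and bilinearity that collapses the quartic expression in \eqref{eq:Haagerup_matrix} to the single bicharacter $\pair{z-x}{t-y}$ --- after which the group-circulant structure makes everything else mechanical.
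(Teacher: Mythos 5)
Your proof is correct and follows essentially the same route as the paper: reduce the quartic entry to $\frac{1}{d}\Re\pair{z-x}{t-y}$, diagonalize the resulting group-circulant in the character basis of $\Omega^2$ to get ${\rm sp}(\Lam)=\{\Re\pair{a}{b}\mid a,b\in\Omega\}$, and finish with the parity argument (the paper splits $\Lam$ into $\Lam^\pm$ before diagonalizing, which is just your splitting of $\Re\pair{u}{v}$ written out in advance). The only cosmetic difference is in the final step, where you use $\pair{a}{b}^d=\pair{da}{b}=1$ directly instead of arguing via the orders of the cyclic subgroups generated by $a$ and $b$; both are fine.
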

\begin{proof}
Assuming \eqref{eq:MUB_Fourier}, the matrix \eqref{eq:Haagerup_matrix} becomes $\Lam = (1/2)(\Lam^+ + \Lam^-)$, where
$$
\Lambda^+_{(x,y),(z,t)} = \frac{1}{d} \pair{z-x}{t-y} = \Lambda^-_{(x,t),(z,y)} \,.
$$
The orthogonality relations between elements of $\widehat{\Omega}$ \cite[Ch.~XVIII, \S 5]{LangAlgebra} give
$$
\sum_{z\in\Omega}\pair{x-y}{z} = d\,\delta_{x,y} \,.
$$
As a consequence, the vectors $\{w^{r,s} \mid r,s\in\Omega\}$ with
$$
w^{r,s}_{(z,t)} = \frac{1}{d^2} \pair{r}{z}\pair{s}{t} \qquad \forall (z,t)\in\Omega^2 
$$
constitute an orthonormal basis of $\Cb^{d^2}$. Moreover,
\aligno{
(\Lam^+ w^{r,s})_{(x,y)} & = \frac{1}{d^3} \sum_{z,t\in\Omega} \pair{z-x}{t-y} \pair{r}{z}\pair{s}{t} \\
& = \frac{1}{d^3} \pair{r}{x}\pair{s}{y} \sum_{z',t'\in\Omega} \pair{z'}{t'} \pair{r}{z'}\pair{s}{t'} \\
& = \frac{1}{d^2} \pair{r}{x}\pair{s}{y} \sum_{t'\in\Omega} \delta_{-r,t'} \pair{s}{t'} \\
& = \overline{\pair{s}{r}} \, w^{r,s}_{(x,y)}\,,
}
where in the second equality we made the substitutions $z = z'+x$ and $t = t'+y$, and in the third one we used the orthogonality relations. We similarly deduce that $\Lam^- w^{r,s} = \pair{s}{r} \, w^{r,s}$. Thus, the spectrum of $\Lam$ is the set
$$
{\rm sp}(\Lam) = \left\{\Re\left[\pair{r}{s}\right] \mid r,s\in\Omega\right\}\,.
$$
As $\abs{\pair{r}{s}} = 1$ for all $r,s$, we have $-1\in{\rm sp}(\Lam)$ if and only if $\pair{r}{s} = -1$ for some $r,s$. We claim that the latter condition is equivalent to $d$ being an even number. Indeed, if $\pair{r}{s} = -1$, then $r$ and $s$ generate two cyclic subgroups of even orders in $\Omega$; since the orders of these subgroups need divide the order of $\Omega$ \cite[Ch.~I, Prop.~4.1]{LangAlgebra}, it follows that $d$ is even. Conversely, if $d$ is even, then there exists an element $r\in\Omega$ such that $2r = 0$ \cite[Ch.~I, Lemma 6.1]{LangAlgebra}. Since the mapping $x\mapsto\pair{x}{\cdot}$ is a group isomorphism of $\Omega$ onto $\widehat{\Omega}$, it must be $\pair{r}{s}\neq 1$ for some $s\in\Omega$, hence $\pair{r}{s} = -1$ as $\pair{r}{s}^2 = \pair{2r}{s} = 1$.
\end{proof}

\subsection{Two non-Fourier conjugate MUB in low dimensions}\label{subsec:numeric}

All the (unordered) pairs of MUB in dimensions $d=2$, $3$ and $5$ are equivalent to the pair that is conjugate with respect to the Fourier transform of the corresponding cyclic groups $\Zb_d$ \cite[Prop.~2.1 and Thm.~2.2]{Haagerup96}. Here, equivalence is understood in the sense of \cite[Sec.~5.1]{DuEnBeZy10}, and it amounts to the equivalence of the complex Hadamard matrices associated with the MUB at hand up to conjugate transposition (see \cite[Def.~2.2]{TaZy06} for the definition of equivalent complex Hadamard matrices). We recall that the Hadamard matrix associated with the pair of MUB $\{\phii_x\mid x\in\Omega\}$ and $\{\psi_y\mid y\in\Omega\}$ is the $d\times d$ unitary matrix $H$ defined by
\equano{
H_{x,y} = \ip{\phii_x}{\psi_y} \,,
}
so that \eqref{eq:Haagerup_matrix} rewrites
\equano{
\Lambda_{(x,y),(z,t)} = d\,\Re\left(\overline{H}_{z,y} H_{z,t} \overline{H}_{x,t} H_{x,y}\right) \,.
}

The smaller dimension $d$ in which Thm.~\ref{thm:Fourier} does not exhaust all possible MUB is thus $d=4$. In this case, there exists a continuous $1$-parameter family of inequivalent complex Hadamard matrices \cite[Prop.~2.1]{Haagerup96}, labeled by $a\in [0,\pi)$. For all these matrices, we symbolically computed the eigenvalues of $\Lam$ by means of the Wolfram Mathematica\textsuperscript{\textregistered} software, and we found that $-1\in{\rm sp}(\Lam)$ independently of the value of $a$. Therefore, no uniformly noisy versions of any two MUB yield extreme points of the convex set $\cOO$ for $d=4$.

We repeated the same computer-assisted evaluation of the eigenvalues of $\Lam$ for some complex Hadamard matrices in dimensions $d=6$ and $7$, all taken from \cite{TaZy06} and inequivalent to the Fourier conjugate pairs; see Table \ref{tab:eigenvalues} for the obtained results. We always found $-1\in{\rm sp}(\Lam)$ in dimension $d=6$, and, remarkably, we found $-1\in{\rm sp}(\Lam)$ also in one case with $d=7$. The latter result shows that in odd dimensions not all pairs of MUB can be used to construct extreme points of $\cOO$, as instead one might have expected by looking at the Fourier conjugate case. It also proves that the existence of inequivalent pairs of MUB actually results in different geometric properties of the respective uniformly noisy versions, a fact that was already observed for triplets of inequivalent MUB by comparing the respective noise robustness in \cite{DeSkFrBr19}.

Still in dimensions $d=6$ and $7$, the evaluation of the eigenvalues of $\Lam$ for the other parametric families of Hadamard matrices provided by \cite{TaZy06} is a computationally demanding task, as it is for non-Fourier conjugate MUB in higher dimensions. We leave it as an open problem whether, in contrast to the Fourier conjugate case, $-1\notin{\rm sp}(\Lam)$ for some pairs of MUB in even dimension.

\begin{table}[h]
\smallskip
\Rotatebox{-270}{
{
\Tiny
{
\begin{tabularx}{1.\textwidth}{ c | c  c   c  c }
\toprule
dimension & matrix &  parameter& eigenvalues & mult.\\
\midrule
\multirow{4}{*}{$d = 4$} &
\multirow{4}{*}{
$
\frac{1}{\sqrt{4}}
\begin{pmatrix}
 1 & 1 & 1 & 1 \\
 1 & \rmi \rme^{\rmi a} & -1 & -\rmi \rme^{\rmi a} \\
 1 & -1 & 1 & -1 \\
 1 & -\rmi \rme^{\rmi a} & -1 & \rmi \rme^{\rmi a} \\
\end{pmatrix}
$} & \multirow{4}{*}{\footnotesize{$a\in [0,\pi)$}}  & $-1$ & $4$ \\
&& & $-\left| \sin (a)\right|$  & $2$\\
&& & $\left| \sin (a)\right|$ &$2$\\
&& & $1$ & $8$ \\[1mm]
\midrule
\multirow{20}{*}{$d=6$} &
\multirow{6}{*}{
$
\frac{1}{\sqrt{6}}
\begin{pmatrix}
 1 & 1 & 1 & 1 & 1 & 1 \\
 1 & -1 & \rmi & -\rmi & -\rmi & \rmi \\
 1 & \rmi & -1 & \rmi \rme^{\rmi a} & -\rmi \rme^{\rmi a} & -\rmi \\
 1 & -\rmi & \rmi \rme^{-\rmi a} & -1 & \rmi & -\rmi \rme^{-\rmi a} \\
 1 & -\rmi & -\rmi \rme^{-\rmi a} & \rmi & -1 & \rmi \rme^{-\rmi a} \\
 1 & \rmi & -\rmi & -\rmi \rme^{\rmi a} & \rmi \rme^{\rmi a} & -1 \\
\end{pmatrix}
$} & \multirow{6}{*}{\footnotesize{$a\in [0,2\pi)$}} & &\\
&&& -1  & 5 \\
&&& -2/3 & 6 \\
&&& 0 & 10 \\
&&& 1 & 15\\
&&&&\\[2.5mm]
\cline{2-5}
&&&&\\[-1.7mm]
&
\multirow{7}{*}[-1.4mm]{
$
\frac{1}{\sqrt{6}}
\begin{pmatrix}
 1 & 1 & 1 & 1 & 1 & 1 \\
 1 & -1 & -\omega  & -\omega ^2 & \omega ^2 & \omega  \\
 1 & -\overline{\omega } & 1 & \omega ^2 & -\omega ^3 & \omega ^2 \\
 1 & -\overline{\omega }^2 & \overline{\omega }^2 & -1 & \omega ^2 & -\omega ^2 \\
 1 & \overline{\omega }^2 & -\overline{\omega }^3 & \overline{\omega }^2 & 1 & -\omega  \\
 1 & \overline{\omega } & \overline{\omega }^2 & -\overline{\omega }^2 & -\overline{\omega } & -1 \\
\end{pmatrix}
$}
&  \multirow{7}{*}[-1.4mm]{\footnotesize{$\omega = \frac{1-\sqrt{3}\pm\rmi \sqrt[4]{12} }{2} \quad$}} & $-1$ & $5$ \\
&&& $-(\sqrt{3}-1)$ & 2 \\
&&& $- \Big[\sqrt{7  (7-4 \sqrt{3} )}+ 2 \sqrt{3} -3 \Big]/2$ & 4\\
&&& $-(2-\sqrt{3} )$ & 4 \\
&&& $ \Big[\sqrt{7  (7-4 \sqrt{3} )}- 2 \sqrt{3} +3 \Big]/2$ & 4\\
&&& $3 \sqrt{3}-5$ & 2 \\
&&& 1 & 15\\
\cline{2-5}\\[0.5mm]
&
\multirow{6}{*}[1.5mm]{
$
\frac{1}{\sqrt{6}}
\begin{pmatrix}
 1 & 1 & 1 & 1 & 1 & 1 \\
 1 & 1 & \omega  & \omega  & \omega ^2 & \omega ^2 \\
 1 & \omega  & 1 & \omega ^2 & \omega ^2 & \omega  \\
 1 & \omega  & \omega ^2 & 1 & \omega  & \omega ^2 \\
 1 & \omega ^2 & \omega ^2 & \omega  & 1 & \omega  \\
 1 & \omega ^2 & \omega  & \omega ^2 & \omega  & 1 \\
\end{pmatrix}
$}
&  \multirow{6}{*}[1mm]{\footnotesize{$\omega = \rme^{\frac{2\rmi \pi }{3}}$}} &&   \\
&&& $-1$ & $9$ \\
&&& $1/4$ & $16$ \\
&&& $1$ & $11$\\
&&&&\\
&&&&\\
\midrule
&&&&\\[-2.7mm]
\multirow{21}{*}{$d=7$} &
\multirow{15}{*}{
$
\frac{1}{\sqrt{7}}
\begin{pmatrix}
 1 & 1 & 1 & 1 & 1 & 1 & 1 \\
 1 & \omega  & \omega ^4 & \omega ^5 & \omega ^3 & \omega ^3 & \omega  \\
 1 & \omega ^4 & \omega  & \omega ^3 & \omega ^5 & \omega ^3 & \omega  \\
 1 & \omega ^5 & \omega ^3 & \omega  & \omega ^4 & \omega  & \omega ^3 \\
 1 & \omega ^3 & \omega ^5 & \omega ^4 & \omega  & \omega  & \omega ^3 \\
 1 & \omega ^3 & \omega ^3 & \omega  & \omega  & \omega ^4 & \omega ^5 \\
 1 & \omega  & \omega  & \omega ^3 & \omega ^3 & \omega ^5 & \omega ^4 \\
\end{pmatrix}
$}
&  \multirow{15}{*}{\footnotesize{$ \omega = \rme^{\frac{\rmi \pi }{3}}$}} & -1 & 1   \\
&&& $ -(9+\sqrt{22})/14$ & $2$ \\
&&& ${-(3\sqrt{65}+1)/28}$ & $1$\\
&&& $-11/14$ & $1$\\
&&& $-(5+3\sqrt{2})/14$ & $5$ \\
&&& $-1/2$ & $1$ \\
&&& $ -(\sqrt{22}-9)/14$ & $2$\\
&&& $-(5-{3}\sqrt{2})/14$ & $5$\\
&&& ${(3\sqrt{65}-1)/28}$ & $1$\\
&&& ${13/14}$ & $2$\\
&&& $1$ & $16$ \\
&&& $\ast$ & $3$\\
&&& $\dag$ & $3$\\
&&& $\flat$ & $3$\\
&&& $\sharp$ & $3$\\\\[-1.5mm]
\cline{2-5}\\[0.5mm]
 &
\multirow{6}{*}[1.5mm]{
$
\frac{1}{\sqrt{7}}
\begin{pmatrix}
 1 & 1 & 1 & 1 & 1 & 1 & 1 \\
 1 & \omega  & 1 & \overline{\omega } & \omega  & \overline{\omega } & 1 \\
 1 & \omega  & \omega  & \overline{\omega } & 1 & 1 & \overline{\omega } \\
 1 & \omega ^2 & \omega ^2 & \omega  & \omega  & 1 & \omega  \\
 1 & 1 & \omega  & 1 & \omega  & \overline{\omega } & \overline{\omega } \\
 1 & \omega ^2 & \omega  & \omega  & \omega ^2 & \omega  & 1 \\
 1 & \omega  & \omega ^2 & 1 & \omega ^2 & \omega  & \omega  \\
\end{pmatrix}
$}
&  \multirow{6}{*}{\footnotesize{$ \omega = \frac{-3\pm\rmi \sqrt{7}}{4} \quad $}}& $-\sqrt{57}/8$ & $8$ \\
&&& $-3/4$ & $8$\\
&&&$-\sqrt{2}/4$& $6$\\
&&&$\sqrt{2}/4$& $6$\\
&&& $\sqrt{57}/8$ & $8$ \\
&&& $1$ & $13$ \\[2mm]
\bottomrule
\end{tabularx}}}}

\smallskip
\caption{The table presents those Hadamard matrices for which the spectrum of the matrix $\Lambda$ defined by \eqref{eq:Haagerup_matrix} has been evaluated by using Mathematica\textsuperscript{\textregistered}.
The symbols $\ast,\dag,\flat,\sharp$ denote the roots of the polynomial $19208 \lam^4 + 15092 \lam^3 - 12642 \lam^2 - 6167 \lam + 3031$, and ${\rm mult.} = {\rm multiplicity}$.\label{tab:eigenvalues}}
\end{table}

\clearpage
\newpage

\section{The exceptional compatible pair in dimension $d\geq 3$}\label{sec:exceptional}

According to the discussion in Sec.~\ref{sec:prel}, there is still one value of the parameters $(\lam,\mu)$ for which the pair of observables $(\Ao_\lam,\Bo_\mu)$ may be an extreme point of the set $\cOO$, namely, in dimension $d\geq 3$, the lower left vertex $\bgam_d = (1/(1-d)\,,\,1/(1-d))$ of the compatibility region depicted in Fig.~\ref{fig:MUBregion_b}. Actually, the next result shows that only in one case this point gives rise to extremal compatible observables.

\begin{theorem}\label{thm:gamma_d}
For $d\geq 3$, the pair of observables $\big(\Ao_{1/(1-d)}\,,\,\Bo_{1/(1-d)}\big)$ is an extreme point of the convex set $\cOO$ if and only if $d=3$.
\end{theorem}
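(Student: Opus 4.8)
The plan is to exhibit one explicit joint observable $\Co$ of $(\Ao_{1/(1-d)},\Bo_{1/(1-d)})$, note that each effect $\Co(x,y)$ has rank $d-2$, and then conclude extremality for $d=3$ from Prop.~\ref{prop:indep+rk1->ext} and non-extremality for $d\geq4$ by a direct convex decomposition; the fact that $\Co(x,y)$ is rank-$1$ exactly when $d=3$ is what makes $3$ the borderline dimension. Since $\Ao_{1/(1-d)}(x)=\tfrac{1}{d-1}(\id-\Ao(x))$ and $\Bo_{1/(1-d)}(y)=\tfrac{1}{d-1}(\id-\Bo(y))$, the candidate is
$$
\Co(x,y)=\frac{1}{d(d-2)}\,\id-\frac{1}{(d-1)(d-2)}\bigl[\Ao(x)+\Bo(y)-\Ao(x)\Bo(y)-\Bo(y)\Ao(x)\bigr],
$$
which equals $\tfrac{1}{d(d-2)}$ times the orthogonal projection onto $\spanno{\phii_x,\psi_y}^\perp$, a $(d-2)$-dimensional subspace; thus $\Co(x,y)\geq0$ and $\rank\Co(x,y)=d-2$. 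Using \eqref{eq:MUB_condition} one checks that $\sum_y\Co(x,y)=\Ao_{1/(1-d)}(x)$ and $\sum_x\Co(x,y)=\Bo_{1/(1-d)}(y)$, so $\Co$ is a joint observable. It is of the form treated in Lemma~\ref{lem:K+E}, with $a=b=-\tfrac{1}{(d-1)(d-2)}$, $c=\tfrac{1}{(d-1)(d-2)}$, $e=\tfrac{1}{d(d-2)}$; a short computation gives $c\neq0$, $da+2c=db+2c=-\tfrac{1}{d-1}$ and $da+db+2c+d^2e=1$, so all of \eqref{eq:Kinvert} hold, and by that lemma the operators $\{\Co(x,y)\mid x,y\in\Omega\}$ are linearly independent if and only if $-1\notin{\rm sp}(\Lam)$.

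When $d=3$ this finishes the proof. Here $\rank\Co(x,y)=1$; and since every unordered pair of MUB in dimension $3$ is equivalent to the Fourier-conjugate pair of $\Zb_3$ \cite{Haagerup96} while ${\rm sp}(\Lam)$ is an invariant of that equivalence class \cite{DuEnBeZy10}, Prop.~\ref{prop:sp(Lam)-Fourier} (with $d=3$ odd; in fact ${\rm sp}(\Lam)=\{1,-1/2\}$ in this case) gives $-1\notin{\rm sp}(\Lam)$, hence the $\Co(x,y)$ are linearly independent. Prop.~\ref{prop:indep+rk1->ext} then yields $(\Ao_{-1/2},\Bo_{-1/2})\in\ext\cOO$.

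When $d\geq4$ I would show non-extremality directly. Fix $x\in\Omega$ and distinct $y_1,y_2\in\Omega$. Because $\abs{\ip{\psi_{y_1}}{\phii_x}}^2+\abs{\ip{\psi_{y_2}}{\phii_x}}^2=2/d<1$, the vector $\phii_x$ lies outside $\spanno{\psi_{y_1},\psi_{y_2}}$, so $\spanno{\phii_x,\psi_{y_1},\psi_{y_2}}$ is $3$-dimensional, and as $d>3$ there is a unit vector $\xi$ orthogonal to all three; then $\xi$ is an eigenvector of $\Co(x,y_1)$ and of $\Co(x,y_2)$ with eigenvalue $\tfrac{1}{d(d-2)}$. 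For $0<\eps\leq\tfrac{1}{d(d-2)}$ define $\Co^\pm\in\OO{\Omega^2}$ by $\Co^\pm(x,y_1)=\Co(x,y_1)\pm\eps\kb\xi\xi$, $\Co^\pm(x,y_2)=\Co(x,y_2)\mp\eps\kb\xi\xi$, and $\Co^\pm=\Co$ elsewhere: positivity holds because the perturbation merely shifts the eigenvalue along $\xi$, and $\sum_{x',y'}\Co^\pm(x',y')=\id$ because the two corrections cancel. Then $\Co=\tfrac12(\Co^++\Co^-)$, and summing over $y$ the corrections cancel again, so $\Co^\pm_{[1]}=\Ao_{1/(1-d)}$; but $\Co^\pm_{[2]}(y_1)=\Bo_{1/(1-d)}(y_1)\pm\eps\kb\xi\xi$ differ. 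Hence $(\Ao_{1/(1-d)},\Bo_{1/(1-d)})=\tfrac12\bigl[(\Co^+_{[1]},\Co^+_{[2]})+(\Co^-_{[1]},\Co^-_{[2]})\bigr]$ expresses the pair as a non-trivial convex combination of two distinct elements of $\cOO$, so it is not extremal.

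The step I expect to be the crux is recognising why dimension $3$ stands out: the joint observable $\Co(x,y)$ has rank $d-2$, hence is rank-$1$ exactly for $d=3$, and for $d\geq4$ there is room — precisely, a direction orthogonal to $\phii_x,\psi_{y_1},\psi_{y_2}$, which exists if and only if $d>3$ — to perturb $\Co$ so that one margin changes while the other is preserved. Everything else is bookkeeping, except that the $d=3$ case also requires $-1\notin{\rm sp}(\Lam)$, which is where the low-dimensional MUB classification and Prop.~\ref{prop:sp(Lam)-Fourier} are used; the $d\geq4$ argument needs no information about $\Lam$ and is uniform in the choice of MUB.
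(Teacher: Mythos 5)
Your argument is correct and follows essentially the same route as the paper: the same joint observable \eqref{eq:C_gamma_d}, the same rank-$(d-2)$ observation, Lemma~\ref{lem:K+E} plus Prop.~\ref{prop:indep+rk1->ext} and the $d=3$ MUB classification with Prop.~\ref{prop:sp(Lam)-Fourier} for sufficiency, and for $d\geq 4$ the perturbation by $\pm\eps\kb{\xi}{\xi}$ along a common direction of $\ran{\Co(x,y_1)}$ and $\ran{\Co(x,y_2)}$, which is exactly the construction the paper packages as Prop.~\ref{prop:ext_ran}. The only difference is presentational: you inline that perturbation argument (with a nice explicit check that $\phii_x,\psi_{y_1},\psi_{y_2}$ are linearly independent via $2/d<1$) instead of citing it as a separate proposition.
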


For $d\geq 3$, a joint observable of $\Ao_{1/(1-d)}$ and $\Bo_{1/(1-d)}$ is given by
\equasi{eq:C_gamma_d}{
\Co(x,y) = \frac{1}{d(d-2)}\left\{ \id -\frac{d}{d-1} \left[ \Ao(x) + \Bo(y) - \left(\Ao(x)\Bo(y) + \Bo(y)\Ao(x)\right)\right]\right\}
}
(see \cite[Prop.~S10 of the Supplementary Material]{CaHeToPRL19}). Indeed, it follows by direct inspection that $\Ao_{1/(1-d)}$ and $\Bo_{1/(1-d)}$ are the margins of $\Co$ given in \eqref{eq:C_gamma_d}; in particular, $\Co$ is normalized. However, since $\Co$ is not of the L\"uders form \eqref{eq:Luders2}, we still need to show its positivity. To this aim, observe that, for any $x,y\in\Omega$, the operator
$$
\Pi(x,y) = \frac{d}{d-1} \left[ \Ao(x) + \Bo(y) - \left(\Ao(x)\Bo(y) + \Bo(y)\Ao(x)\right)\right]
$$
is the orthogonal projection onto the linear span of the vectors $\{\phii_x,\psi_y\}$. This follows since we have the obvious inclusion $\ran{\Pi(x,y)} \supseteq \spanno{\phii_x,\psi_y}$, and moreover $\Pi(x,y)$ commutes with both projections $\Ao(x)$ and $\Bo(y)$, as one readily sees from the relations $\Ao(x)\Bo(y)\Ao(x) = (1/d)\Ao(x)$ and $\Bo(y)\Ao(x)\Bo(y) = (1/d)\Bo(y)$. Now, the operator $\Co(x,y)$ is a positive multiple of the complementary projection $\id-\Pi(x,y)$, hence not only is $\Co(x,y)$ positive, as claimed, but also $\ran{\Co(x,y)} = \{\phii_x,\psi_y\}^\perp$. The `only if' statement in Thm.~\ref{thm:gamma_d} is then an easy consequence of the next general result.

\begin{proposition}\label{prop:ext_ran}
Suppose $\Ao,\Bo\in\OO{\Omega}$ are any two compatible observables, and let $\Co$ be a joint observable of $\Ao$ and $\Bo$. If $\ran{\Co(x_1,y_1)}\cap\ran{\Co(x_2,y_2)} \neq \{0\}$ for some $(x_1,y_1)\neq (x_2,y_2)$, then $(\Ao,\Bo)$ is not an extreme point of the convex set $\cOO$.
\end{proposition}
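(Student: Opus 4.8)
The plan is to prove the contrapositive in a constructive way: from a nonzero vector in the common range I will build a nontrivial convex decomposition of $(\Ao,\Bo)$ inside $\cOO$. Concretely, pick $\xi\neq 0$ with $\xi\in\ran{\Co(x_1,y_1)}\cap\ran{\Co(x_2,y_2)}$. The idea is to perturb $\Co$ only in the two ``cells'' $(x_1,y_1)$ and $(x_2,y_2)$, by adding a rank-one operator proportional to $\kb{\xi}{\xi}$ in the first and subtracting it in the second. Because the perturbation is supported on just these two cells and has vanishing total sum, normalization of the perturbed maps is automatic, and positivity has to be checked only on those two cells.

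First I would recall the elementary fact that, for a positive operator $A$ on $\hh$ and a vector $\xi\in\hh$, one has $\xi\in\ran{A}$ if and only if $\kb{\xi}{\xi}\leq c\,A$ for some $c\geq 0$ (in finite dimension $\ran{A}=\ran{A^{1/2}}$, so writing $\xi=A^{1/2}\eta$ the Cauchy--Schwarz inequality gives $\kb{\xi}{\xi}\leq\no{\eta}^2\,A$). Applying this to $A=\Co(x_1,y_1)$ and to $A=\Co(x_2,y_2)$ and rescaling $\xi$, I may assume $\kb{\xi}{\xi}\leq\Co(x_1,y_1)$ and $\kb{\xi}{\xi}\leq\Co(x_2,y_2)$. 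I then define $D:\Omega^2\to\lh$ by $D(x_1,y_1)=\kb{\xi}{\xi}$, $D(x_2,y_2)=-\kb{\xi}{\xi}$, and $D(x,y)=0$ otherwise, and set $\Co^\pm(x,y)=\Co(x,y)\pm D(x,y)$. Each $\Co^\pm$ is again an observable: the only cells where positivity is not obvious are $(x_1,y_1)$ and $(x_2,y_2)$, where $\Co^\pm$ equals $\Co(x_i,y_i)+\kb{\xi}{\xi}\geq 0$ or $\Co(x_i,y_i)-\kb{\xi}{\xi}\geq 0$ by the choice of $\xi$; and $\sum_{x,y}\Co^\pm(x,y)=\id$ because $\sum_{x,y}D(x,y)=0$.

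Next I would pass to the margins. Since $\Co=\tfrac12(\Co^++\Co^-)$, the pairs $p^\pm=\big(\Co^\pm_{[1]},\Co^\pm_{[2]}\big)\in\cOO$ satisfy $(\Ao,\Bo)=\tfrac12(p^++p^-)$, so it only remains to check that $p^+\neq p^-$, i.e.\ that $D$ has a nonzero row sum or a nonzero column sum. This is where the hypothesis $(x_1,y_1)\neq(x_2,y_2)$ enters, via a short dichotomy: if $x_1\neq x_2$, then $\sum_{y}D(x_1,y)=\kb{\xi}{\xi}\neq 0$, hence $\Co^+_{[1]}(x_1)\neq\Co^-_{[1]}(x_1)$ and already the first margins differ; if instead $x_1=x_2$, then necessarily $y_1\neq y_2$ and $\sum_{x}D(x,y_1)=\kb{\xi}{\xi}\neq 0$, so the second margins differ. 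In either case $p^+\neq p^-$, and $(\Ao,\Bo)$ is a nontrivial convex combination of two points of $\cOO$, hence not an extreme point.

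The argument is short and its steps are routine; there is no serious obstacle, only one delicate point, namely the very first design choice of making the perturbation rank one and supported on exactly the two cells $(x_1,y_1)$ and $(x_2,y_2)$. This single choice is what simultaneously makes normalization free, confines the positivity check to those two cells (where the common-range hypothesis is precisely what is needed), and guarantees that the margins are actually moved.
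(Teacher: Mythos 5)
Your proposal is correct and follows essentially the same route as the paper's proof: a rank-one perturbation $\pm\kb{\xi}{\xi}$ added to one cell and subtracted from the other, with positivity secured by dominating $\kb{\xi}{\xi}$ by a multiple of each $\Co(x_i,y_i)$ (the paper does this via the smallest nonzero eigenvalue and the range projection, you via $\ran{A}=\ran{A^{1/2}}$ and Cauchy--Schwarz, which is an equivalent justification). Your explicit dichotomy showing the margins actually differ fills in a step the paper leaves as ``easy to check''.
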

\begin{proof}
Let $\eta\in\ran{\Co(x_1,y_1)}\cap\ran{\Co(x_2,y_2)}$ with $\no{\eta} = 1$. Moreover, denote by $P_i$ the orthogonal projection onto $\ran{\Co(x_i,y_i)}$, and let $\lam_i$ be the smallest non-zero eigenvalue of the operator $\Co(x_i,y_i)$. We have
$$
\Co(x_i,y_i) \geq \lam_i P_i \geq \lam_i\kb{\eta}{\eta} \quad \Rightarrow \quad \Co(x_i,y_i) + \varepsilon\kb{\eta}{\eta} \geq 0 \quad \forall \varepsilon\in[-\lam_i\,,\,\lam_i] \,.
$$
Therefore, if $0<\varepsilon\leq\min\{\lam_1,\lam_2\}$, we can define the following two observables $\Co^+,\Co^-\in\OO{\Omega^2}$
$$
\Co^{\pm}(x,y) = \begin{cases}
\displaystyle \Co(x_1,y_1) \pm \varepsilon\kb{\eta}{\eta} & \text{ if } (x,y) = (x_1,y_1) \\
\displaystyle \Co(x_2,y_2) \mp \varepsilon\kb{\eta}{\eta} & \text{ if } (x,y) = (x_2,y_2) \\
\displaystyle \Co(x,y) & \text{ otherwise}
\end{cases}
$$
and their margins $\Ao^\pm$, $\Bo^\pm$. Clearly, $(\Ao,\Bo) = (1/2) \left[ (\Ao^+,\Bo^+) + (\Ao^-,\Bo^-) \right]$, and it is easy to check that $(\Ao^+,\Bo^+) \neq (\Ao^-,\Bo^-)$. This means that $(\Ao,\Bo)$ is not extremal in $\cOO$.
\end{proof}

\begin{proof}[of Thm.~\ref{thm:gamma_d}]
Fix any $x\in\Omega$, and let $y_1,y_2\in\Omega$ with $y_1\neq y_2$. As we already noticed, $\ran{\Co(x,y_1)}\cap\ran{\Co(x,y_2)} = \{\phii_x,\psi_{y_1},\psi_{y_2}\}^\perp$. Since the vectors $\phii_x,\psi_{y_1},\psi_{y_2}$ are linearly independent, this implies $\ran{\Co(x,y_1)}\cap\ran{\Co(x,y_2)}\neq\{0\}$ if and only if $d\geq 4$. Prop.~\ref{prop:ext_ran} then yields the necessity statement in Thm.~\ref{thm:gamma_d}. On the other hand, if $d=3$ we have $\rank{\Co(x,y)} = 1$ for all $x,y$, hence by Prop.~\ref{prop:indep+rk1->ext} and Lemma \ref{lem:K+E} (with $c = -a = -b = 1/[(d-2)(d-1)]$ and $e = 1/[d(d-2)]$\,) the pair of observables $(\Ao_{1/(1-d)},\Bo_{1/(1-d)})$ is an extreme point of $\cOO$ if and only if condition \eqref{it:b_thm_extreme_Gamma_d} of Thm.~\ref{thm:extreme_Gamma_d} hold. In \cite[Prop.~2.1]{Haagerup96}, it was proven that in dimension $d=3$ any two MUB are equivalent to the (unique) Fourier conjugate pair. For such a pair, we already found in Prop.~\ref{prop:sp(Lam)-Fourier} that $-1\notin {\rm sp}(\Lam)$. This concludes the proof of the sufficiency statement in Thm.~\ref{thm:gamma_d}.
\end{proof}

\section{Discussion}\label{sec:discussion}

We have shown that in dimension $d\geq 4$ two uniformly noisy MUB can be extremal in the set of all pairs of compatible observables only if their noise paramaters lie on the part of an ellipse \eqref{eq:ellipse}, and in this case they actually are extremal if and only if $-1$ is an eigenvalue of the Haagerup matrix \eqref{eq:Haagerup_matrix}. We have tested this condition analytically for all Fourier conjugate MUB in arbitrary dimensions, and with computer-assisted techniques for some other inequivalent pairs in low dimensions. Further, we have proven that in dimension $d=7$ and for fixed values of the noise parameters, there exist uniformly noisy pairs of MUB both with and without the property of being extremal in the set of all compatible pairs of observables. The latter fact provides a new geometric manifestation of MUB inequivalence, and shows that pairs of MUB are actually enough to feature concrete differences of the respective noisy versions.

Finally, we have seen that the cases in dimensions $d=2$ and $3$ are special, as in the $d=2$ case no uniformly noisy version of two MUB can yield any extremal pair of compatible observables, while in the $d=3$ case also the exceptional pair $(\Ao_{1/(1-d)}\,,\,\Bo_{1/(1-d)})$ is extremal. 

In principle, the problem of characterizing extremality can be clearly carried over to arbitrary $k$-uples of uniformly noisy versions of MUB. In this case, however, even the $k$-dimensional analogue of the compatibility region depicted in Fig.~\ref{fig:MUBregion} is still unknown. In particular, as a consequence of the results in \cite{DeSkFrBr19} and unlike the $k=2$ case, for fixed dimension $d$ the shape of the compatibility region actually depends on the equivalence class of the $k$-uple of MUB at hand if $k\geq 3$. This indeed already happens for $k=3$ and $d=5$, which is the first known case in which inequivalent $k$-uples of MUB in fact feature different noise robustness. Thus, the extremality problem seems to be quite intractable for $k\geq 3$ and arbitrary equivalence classes of MUB. However, we do not exclude that it may become much simpler and more accessible for specific $k$-uples, like e.g.~the complete sets of $d+1$ MUB obtained by standard methods in odd-prime power dimensions \cite{Ivanovic81,WoFi89,BaBoRoVa02,CaScTo16}.

Still in the case of only two MUB, we leave the existence of extremal uniformly noisy pairs of MUB in even dimensions as a concluding open problem for further investigations.

\bibliographystyle{unsrt}

\begin{thebibliography}{99}

\bibitem{HeMiZi16}
T.~Heinosaari, T.~Miyadera, and M.~Ziman.
\newblock An invitation to quantum incompatibility.
\newblock {\em J. Phys. A: Math. Theor.}, 49:123001, 2016.

\bibitem{Holevo85}
A.S.~Holevo.
\newblock Statistical definition of observable and the structure of statistical models.
\newblock {\em Rep. Math. Phys.}, 22:385--407, 1985.

\bibitem{AlCaHeTo09}
S.T.~Ali, C.~Carmeli, T.~Heinosaari, and A.~Toigo.
\newblock Commutative {POVM}s and fuzzy observables.
\newblock {\em Found. Phys.}, 39:593--612, 2009.

\bibitem{QTOS76}
E.B.~Davies.
\newblock {\em Quantum Theory of Open Systems}.
\newblock Academic Press, London, 1976.

\bibitem{QDET76}
C.W.~Helstrom.
\newblock {\em Quantum Detection and Estimation Theory}.
\newblock Academic Press, New York, 1976.

\bibitem{PSAQT82}
A.S.~Holevo.
\newblock {\em Probabilistic and Statistical Aspects of Quantum Theory}.
\newblock North-Holland Publishing Co., Amsterdam, 1982.

\bibitem{HeScToZi14}
T.~Heinosaari, J.~Schultz, A.~Toigo, and M.~Ziman.
\newblock Maximally incompatible quantum observables.
\newblock {\em Phys. Lett. A}, 378:1695--1699, 2014.

\bibitem{HeKiRe15}
T.~Heinosaari, J.~Kiukas, and D.~Reitzner.
\newblock Noise robustness of the incompatibility of quantum measurements.
\newblock {\em Phys. Rev. A}, 92:022115, 2015.

\bibitem{BuHeScSt13}
P.~Busch, T.~Heinosaari, J.~Schultz, and N.~Stevens.
\newblock Comparing the degrees of incompatibility inherent in probabilistic physical theories.
\newblock {\em EPL}, 103:10002, 2013.

\bibitem{BaGaGhKa13}
M.~Banik, M.R.~Gazi, S.~Ghosh, and G.~Kar.
\newblock Degree of complementarity determines the nonlocality in quantum mechanics.
\newblock {\em Phys. Rev. A}, 87:052125, 2013.

\bibitem{JePl17}
Anna Jen\ifmmode~\check{c}\else \v{c}\fi{}ov\'a and Martin Pl\'avala.
\newblock Conditions on the existence of maximally incompatible two-outcome measurements in general probabilistic theory.
\newblock {\em Phys. Rev. A}, 96:022113, 2017.

\bibitem{Gu16}
S.~Gudder.
\newblock Compatibility for probabilistic theories.
\newblock {\em Math. Slovaca}, 66(2):449--458, 2016.

\bibitem{ViTa99}
G.~Vidal and R.~Tarrach.
\newblock Robustness of entanglement.
\newblock {\em Phys. Rev. A}, 59:141--155, 1999.

\bibitem{UoBuGuPe15}
R.~Uola, C.~Budroni, O.~G\"uhne, and J.-P.~Pellonp\"a\"a.
\newblock One-to-one mapping between steering and joint measurability problems.
\newblock {\em Phys. Rev. Lett.}, 115:230402, 2015.

\bibitem{CaSk16}
D.~Cavalcanti and P.~Skrzypczyk.
\newblock Quantitative relations between measurement incompatibility, quantum steering, and nonlocality.
\newblock {\em Phys. Rev. A}, 93:052112, 2016.

\bibitem{DeSkFrBr19}
S.~Designolle, P.~Skrzypczyk, F.~Fr\"owis, and N.~Brunner.
\newblock Quantifying measurement incompatibility of mutually unbiased bases.
\newblock {\em Phys. Rev. Lett.}, 122:050402, 2019.

\bibitem{Haapasalo15}
E.~Haapasalo.
\newblock Robustness of incompatibility for quantum devices.
\newblock {\em J. Phys. A: Math. Theor.}, 48:255303, 2015.

\bibitem{Haagerup96}
U.~Haagerup.
\newblock Orthogonal maximal abelian {$*$}-subalgebras of the {$n\times n$} matrices and cyclic {$n$}-roots.
\newblock In {\em Operator algebras and quantum field theory ({R}ome, 1996)}, pages 296--322. Int. Press, Cambridge, MA, 1997.

\bibitem{CaHeTo12}
C.~Carmeli, T.~Heinosaari, and A.~Toigo.
\newblock Informationally complete joint measurements on finite quantum systems.
\newblock {\em Phys. Rev. A}, 85:012109, 2012.

\bibitem{UoLuMoHe16}
R.~Uola, K.~Luoma, T.~Moroder, and T.~Heinosaari.
\newblock Adaptive strategy for joint measurements.
\newblock {\em Phys. Rev. A}, 94:022109, 2016.

\bibitem{CaHeToPRL19}
C.~Carmeli, T.~Heinosaari, and A.~Toigo.
\newblock Quantum incompatibility witnesses.
\newblock {\em Phys. Rev. Lett.}, 122:130402, 2019.

\bibitem{GuCu18}
L.~Guerini and M.T.~Cunha.
\newblock Uniqueness of the joint measurement and the structure of the set of compatible quantum measurements.
\newblock {\em J. Math. Phys.}, 59:042106, 2018.

\bibitem{Busch86}
P.~Busch.
\newblock Unsharp reality and joint measurements for spin observables.
\newblock {\em Phys. Rev. D}, 33:2253--2261, 1986.

\bibitem{TaZy06}
W.~Tadej and K.~\.{Z}yczkowski.
\newblock A concise guide to complex {H}adamard matrices.
\newblock {\em Open Syst. Inf. Dyn.}, 13(2):133--177, 2006.

\bibitem{CA70}
R.T.~Rockafellar.
\newblock {\em Convex {A}nalysis}.
\newblock Princeton University Press, 1970.

\bibitem{DuEnBeZy10}
T.~Durt, B.-G.~Englert, I.~Bengtsson, and K.~\.{Z}yczkowski.
\newblock On mutually unbiased bases.
\newblock {\em Int. J. Quant. Inf.}, 8:535--640, 2010.

\bibitem{LangAlgebra}
S.~Lang.
\newblock {\em Algebra}, volume 211 of {\em Graduate Texts in Mathematics}.
\newblock Springer-Verlag, New York, third edition, 2002.

\bibitem{Ivanovic81}
I.~D.~Ivanovi{\'c}.
\newblock Geometrical description of quantal state determination.
\newblock {\em J. Phys. A: Math. Gen.}, 14:3241--3245, 1981.

\bibitem{WoFi89}
W.K.~Wootters and B.D.~Fields.
\newblock Optimal state-determination by mutually unbiased measurements.
\newblock {\em Ann. Physics}, 191:363--381, 1989.

\bibitem{BaBoRoVa02}
S.~Bandyopadhyay, P.O.~Boykin, V.~Roychowdhury, and F.~Vatan.
\newblock A new proof for the existence of mutually unbiased bases.
\newblock {\em Algorithmica}, 34(4):512--528, 2002.

\bibitem{CaScTo16}
C.~Carmeli, J.~Schultz, and A.~Toigo.
\newblock Covariant mutually unbiased bases.
\newblock {\em Rev. Math. Phys.}, 28(4):1650009, 2016.

\end{thebibliography}

\end{document}